\newtheorem{remark}{Remark}
\newtheorem{theorem}{Theorem}
\newtheorem{proposition}{Proposition}
\newtheorem{definition}{Definition}
\newtheorem{problem}{Problem}
\newtheorem{cor}{Corollary}
\DeclareMathOperator{\interior}{int}
\DeclareMathOperator{\rank}{rank}
\title{\LARGE \bf
Quantitative Resilience of Linear Systems*
}
\author{Jean-Baptiste Bouvier and Melkior Ornik
\thanks{*This work was supported by an Early Stage Innovations grant from NASA’s Space Technology Research Grants Program, grant no. 80NSSC19K0209 and by NASA grant no. 80NSSC21K1030.}
\thanks{Jean-Baptiste Bouvier and Melkior Ornik are with the Department of Aerospace Engineering and the Coordinated Science Laboratory,
        University of Illinois Urbana-Champaign, USA.
        {\tt\small bouvier3@illinois.edu mornik@illinois.edu}}%
}
\begin{document}

\maketitle
\thispagestyle{empty}
\pagestyle{empty}

\begin{abstract}
    Actuator malfunctions may have disastrous consequences for systems not designed to mitigate them. We focus on the loss of control authority over actuators, where some actuators are uncontrolled but remain fully capable. To counteract the undesirable outputs of these malfunctioning actuators, we use real-time measurements and redundant actuators. In this setting, a system that can still reach its target is deemed resilient. 
    To quantify the resilience of a system, we compare the shortest time for the undamaged system to reach the target with the worst-case shortest time for the malfunctioning system to reach the same target, i.e., when the malfunction makes that time the longest.
    Contrary to prior work on driftless linear systems, the absence of analytical expression for time-optimal controls of general linear systems prevents an exact calculation of quantitative resilience. Instead, relying on Lyapunov theory we derive analytical bounds on the nominal and malfunctioning reach times in order to bound quantitative resilience. We illustrate our work on a temperature control system.
\end{abstract}

\section{INTRODUCTION}\label{sec:intro}

Redundancy is the most effective remedy to actuator malfunctions, but it is also the most onerous \cite{NASA_redundancy}, which leads to our questions of interest: can the system still reach its target even after an actuator malfunction? If so, by how much can it delay the system at reaching its target?
We focus on the malfunction consisting in the loss of control authority over actuators \cite{IFAC}. These malfunctioning actuators do not respond to the controller, but they produce uncontrolled and possibly undesirable inputs within their full range of actuation. A system is resilient if it can reach its target despite such a malfunction \cite{TAC}.

Robust control, which aims to drive the system to its target despite small disturbances, generally cannot synthesize an appropriate controller because we consider undesirable inputs that can have the same magnitude as the controls \cite{weak_robust_control, TAC}. In order to identify and counteract the faulty actuators, the controller relies on sensors measuring in real-time the outputs of each actuators \cite{actuators_measures}.

The resilience of systems with undesirable inputs has been studied in \cite{Heymann_long, Heymann_short} under the name \textit{max-min controllability}. Because these papers limit their theory to unbounded sets of control inputs, their elegant resilience conditions have limited applications.
The work \cite{Schmitendorf_MaxMin} addresses this issue and shows that bounding the inputs levies a death sentence to the simple max-min controllability condition of \cite{Heymann_long}. Similarly, the approaches of \cite{Schmitendorf, Schmitendorf_MaxMin, Delfour} all lead to overly complex conditions for resilience of control systems. 
A simpler approach to resilience comes from differential games theory with H\'ajek's duality theorems \cite{Hajek}. Based on these results and with the controllability conditions of \cite{Brammer}, we establish straightforward resilience conditions for general linear systems.

However, resilience only guarantees that the target is reachable despite the malfunction. It does not say how much longer the malfunctioning system needs to reach the target compared to the nominal undamaged system. If this delay is excessive, resilience becomes useless in practice. 
To measure the delay, quantitative resilience was introduced for driftless linear systems in \cite{SIAM_CT, Quantitative_Resilience} as the maximal ratio of the minimal times required to reach a target for the nominal and malfunctioning systems. However, the geometric approach developed for those works \cite{Maximax_Minimax_JOTA} does not translate to general linear systems.
Instead, in this paper we will use Lyapunov theory \cite{Kalman}.

To calculate quantitative resilience, we first need to determine the fastest time for a linear system to reach its target. Relying on Pontryagin's Maximum Principle \cite{Liberzon} numerous authors have established algorithms to determine time-optimal control signals \cite{Neustadt_numerical, Eaton, Ho, Fujisawa, Grognard_real, Grognard_complex}. Indeed, it is in general impossible to find an analytical expression for time-optimal control signals \cite{Athans}.

We also need to determine the fastest time for the malfunctioning system to reach its target under the undesirable input making that time the longest. The study of optimal controls of perturbed linear systems was initiated by LaSalle \cite{LaSalle} and pursued by Babunashvili \cite{Babunashvili}. The most fruitful approach to our second question comes once again from differential games \cite{Sakawa}, as it considers specifically the worst undesirable input instead of a generic perturbation.

Using these algorithms, we can evaluate the ratio of minimal times for the nominal and malfunctioning systems to reach a target $x_{goal}$ from an initial state $x_0$. Quantitative resilience was introduced in \cite{SIAM_CT} as the maximum of this ratio over all pairs $(x_0, x_{goal})$. As testing every possible pair is not meaningful, and the specific geometric structure of driftless systems considered in \cite{SIAM_CT} vanishes for general linear systems, we instead establish analytical bounds on the minimal reach times to approximate quantitative resilience. 
We are mostly interested by determining a lower bound $b$ to quantitative resilience, as in the worst case the malfunctioning system will need less than $1/b$ times longer than the nominal system to reach $x_{goal}$.

The main contributions of this work are twofold. Firstly, we establish simple necessary and sufficient conditions to verify the resilience of linear systems. Secondly, relying on Lyapunov theory, we establish analytical bounds on the quantitative resilience of linear systems.

The remainder of this paper is organized as follows. Section~\ref{sec:problem} states the two problems of interest. In Section~\ref{sec:prelim}, building on previous work, we derive necessary and sufficient resilience conditions. In Section~\ref{sec:quantitative}, we establish analytical bounds on the quantitative resilience of linear systems. The application of our work on a temperature control system is presented in Section~\ref{sec:results}.

\textit{Notation:} For a set $X \subseteq \mathbb{R}^n$, $co(X)$ denotes its convex hull, $\partial X$ its boundary and $\interior(X)$ its interior. 
The Pontryagin difference \cite{Hajek} between two sets $X \subseteq \mathbb{R}^n$, $Y \subseteq \mathbb{R}^n$ is denoted by $Z := X - Y = \{ z \in X : z + y \in X\ \text{for all}\ y \in Y\}$.
For a set $Z \subseteq \mathbb{C}$, we say that $Re(Z) \leq 0$ (resp. $Re(Z) = 0$) if the real part of each $z \in Z$ verifies $Re(z) \leq 0$ (resp. $Re(z) = 0$).
The set of eigenvalues of a matrix $A$ is $\lambda(A)$. If $\lambda(A) \subseteq \mathbb{R}$, then $\lambda_{min}^A$ and $\lambda_{max}^A$ are the minimal and maximal eigenvalues.
The nullspace and image of $A$ are $\ker(A)$ and Im$(A)$.
Positive definite matrix $P$ is denoted by $P \succ 0$ and defines a $P$\textit{-norm} as $\|x\|_P^2 := x^\top P x$.
The closed ball of the Euclidean norm $\| \cdot \|_2$ of radius $\varepsilon > 0$ and center $c \in \mathbb{R}^n$ is denoted by $\mathbb{B}_\varepsilon(c)$, and the unit sphere is $\mathbb{S} := \partial \mathbb{B}_1(0)$. For $x \in \mathbb{R}^n$, $\|x\|_\infty = \max |x_i|$.
The function $u : [0, +\infty) \rightarrow \mathcal{U}$ is alternatively denoted by $u(\cdot) \in \mathcal{U}$. If $B \in \mathbb{R}^{n \times m}$ and $\mathcal{U}$ is a set, then $B\mathcal{U} := \big\{ Bu : u \in \mathcal{U} \big\}$.
We also use $\mathbf{1} = (1,\hdots,1)$ to denote the vector of $1$.

\section{PROBLEM STATEMENT}\label{sec:problem}

We consider the linear time-invariant system
\begin{equation}\label{eq:initial ODE}
    \dot x(t) = Ax(t) + \bar{B} \bar{u}(t), \quad x(0) = x_0 \in \mathbb{R}^n, \quad \bar{u}(t) \in \bar{\mathcal{U}},
\end{equation}
with $A \in \mathbb{R}^{n \times n}$ and $\bar{B} \in \mathbb{R}^{n \times (m+p)}$ constant matrices and $\bar{\mathcal{U}}$ an hyperrectangle of $\mathbb{R}^{m+p}$.

After a loss of control authority over $p$ of the $m+p$ actuators, the input signal $\bar{u}(\cdot)$ is split between the undesirable inputs $w(\cdot) \in \mathcal{W}$ and the controlled inputs $u(\cdot) \in \mathcal{U}$, with the compact sets $\mathcal{U} \subseteq \mathbb{R}^m$ and $\mathcal{W} \subseteq \mathbb{R}^p$ such that $\mathcal{U} \times \mathcal{W} = \bar{\mathcal{U}}$.
Matrix $\bar{B}$ is accordingly split into two constant matrices $B \in \mathbb{R}^{n \times m}$ and $C \in \mathbb{R}^{n \times p}$ so that the dynamics become
\begin{equation}\label{eq:splitted ODE}
    \dot x(t) = Ax(t) + Bu(t) + Cw(t), \quad x(0) = x_0 \in \mathbb{R}^n,
\end{equation}
with $u(t) \in \mathcal{U}$ and $w(t) \in \mathcal{W}$.
\begin{definition}
    A target $x_{goal} \in \mathbb{R}^n$ is \textit{resiliently reachable} from $x_0 \in \mathbb{R}^n$ by system \eqref{eq:splitted ODE} if for all piecewise continuous $w(\cdot) \in \mathcal{W}$, there is $T > 0$ and piecewise continuous $u(\cdot) \in \mathcal{U}$ such that the solution to \eqref{eq:splitted ODE} exists, is unique and $x(T) = x_{goal}$.
\end{definition}

\begin{definition}
    System \eqref{eq:splitted ODE} is \textit{resilient} to the loss of the actuators corresponding to $C$ if every $x_{goal} \in \mathbb{R}^n$ is resiliently reachable from any $x_0 \in \mathbb{R}^n$ by system \eqref{eq:splitted ODE}.
\end{definition}

We are now led to our first problem.

\begin{problem}\label{prob:resilience}
    Given $A$, $B$, $C$, $\mathcal{U}$, $\mathcal{W}$, determine whether system \eqref{eq:splitted ODE} is resilient.
\end{problem}

If system \eqref{eq:splitted ODE} is indeed resilient, we know that it can reach any target despite malfunctions, but it might need an excessively long time, making its resilience useless in practice. To avoid this scenario, previous work \cite{SIAM_CT} established a practical metric quantifying the resilience of linear driftless systems. We now generalize this metric to any control system.
For a target $x_{goal} \in \mathbb{R}^n$, the \emph{nominal reach time} introduced in \cite{SIAM_CT} becomes
\begin{equation}\label{eq: def T_N^*}
    T_N^*(x_0, x_{goal}) := \hspace{-1mm} \underset{\bar{u}(\cdot)\, \in\, \bar{\mathcal{U}}}{\inf} \hspace{-1mm} \left\{  \hspace{-2mm} \begin{array}{c} T > 0 : x(T) = x_{goal}\\
    \text{following}\ \eqref{eq:initial ODE} \end{array} \hspace{-2mm} \right\},
\end{equation}
the \emph{malfunctioning reach time} becomes
\begin{equation}\label{eq: def T_M^*}
    T_M^*(x_0, x_{goal}) := \hspace{-2mm} \underset{w(\cdot)\, \in\, \mathcal{W}}{\sup} \hspace{-1mm} \left\{ \underset{u(\cdot)\, \in\, \mathcal{U}}{\inf} \hspace{-1mm} \left\{ \hspace{-2mm} \begin{array}{c}
         T > 0 : x(T) = x_{goal}\  \\
         \text{following}\ \eqref{eq:splitted ODE}
    \end{array} \hspace{-2mm} \right\} \hspace{-1mm} \right\},
\end{equation}
and the \emph{quantitative resilience} 
\begin{equation}\label{eq:r_q}
    r_q(x_{goal}) := \underset{x_0\, \in\, \mathbb{R}^n}{\inf}\ \frac{T_N^*(x_0, x_{goal})}{T_M^*(x_0, x_{goal})}.
\end{equation}
If $x_0 = x_{goal}$, then $T_N^* = T_M^* = 0$ and we take the convention that their ratio is $1$.
By definition, if \eqref{eq:splitted ODE} is resilient, then $T_M^*$ is finite and the $\sup$, $\inf$ in \eqref{eq: def T_M^*} become $\max$, $\min$ achieved with optimal signals $w^*$ and $u^*$. In that case, $T_N^*$ is also finite and $\inf$ becomes $\min$ in \eqref{eq: def T_N^*}.
We focus on the classical case $x_{goal} = 0$, so we write $T_N^*$, $T_M^*$ and $r_q$ without their argument $x_{goal}$.

Our resilience framework assumes that the controller has only access to the past and current values of $w$, but not to their future. Then, the optimal control $u^*$ of \eqref{eq: def T_M^*} cannot anticipate a random undesirable input $w^*$, and $T_M^*$ is not likely to be time-optimal as required by Definition 2.4 of \cite{SIAM_CT}.

The only way to calculate $u^*$ without any future knowledge of $w^*$ is to solve the intractable Isaac's main equation \cite{Borgest}. This equation is the differential games counterpart of the Hamilton-Jacobi-Bellman (HJB) equation. According to the review \cite{Isaacs_review}, Isaac's main equation is even more difficult to solve than the HJB equation, which usually results in intractable partial differential equations \cite{Liberzon}. Hence, \cite{Borgest} determines only suboptimal solutions in this setting, the paper itself concludes that its practical contribution is minimal.

Instead, we adopt the setting of \cite{Sakawa}, where $u^*$ and $w^*$ are unique, \textit{bang-bang} \cite{Rechtschaffen_equivalences}, and make the transfer from $x_0$ to $x_{goal}$ time-optimal.
The controller knows that $w^*$ will be chosen to make $T_M^*$ the longest. Thus, $u^*$ is chosen to react optimally to this worst undesirable input. Then, $w^*$ is chosen, and to make $T_M^*$ the longest, it is the same as the controller had predicted. Hence, from an outside perspective it appears as if the controller built $u^*$ knowing $w^*$ in advance, as reflected by \eqref{eq: def T_M^*}.
Then, $T_M^*$ is time-optimal and can be compared with $T_N^*$, leading to the following problem.

\begin{problem}\label{prob:r_q}
    Evaluate the quantitative resilience of system \eqref{eq:splitted ODE}.
\end{problem}

\section{PRELIMINARIES}\label{sec:prelim}

Following \cite{Hajek}, we introduce the dual system to \eqref{eq:splitted ODE}
\begin{equation}\label{eq: hajek simplification}
    \dot y(t) = Ay(t) + z(t), \qquad y(0) = x_0, \qquad z(t) \in \mathcal{Z},
\end{equation}
$\mathcal{Z} := B\mathcal{U} - (-C\mathcal{W}) = \big\{ z \in B\mathcal{U} : z - Cw \in B\mathcal{U}\ \text{for all}\ w \in \mathcal{W} \big\}$. If $0 \in \mathcal{W}$, then $z \in \mathcal{Z}$ if and only if for all $w \in \mathcal{W}$, there exists $u \in \mathcal{U}$ such that $z - Cw = Bu$. Informally, $z$ represents the control available after counteracting any undesirable input.

System \eqref{eq: hajek simplification} is related to Problem~\ref{prob:resilience} through the following duality theorem of \cite{Hajek}, whose proof is extended to non-zero states $x_{goal}$ in \cite{Rechtschaffen_equivalences}, and reformulated in \cite{Schmitendorf}.

\vspace{1mm}

\begin{theorem}[H\'ajek Duality theorem]\label{thm:hajek reciprocity}
    The state of \eqref{eq:splitted ODE} can be driven to $x_{goal}$ at time $T$ by control signal $\phi$ for all $w(\cdot) \in \mathcal{W}$, if and only if the state of \eqref{eq: hajek simplification} can be driven to $x_{goal}$ at time $T$ by a control signal $z(\cdot) \in \mathcal{Z}$. Furthermore, $\phi(w,t) = Bu(t) = z(t) - Cw(t)$.
\end{theorem}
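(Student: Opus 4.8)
The plan is to reduce everything to the variation-of-constants formula. Writing the solutions of \eqref{eq:splitted ODE} and \eqref{eq: hajek simplification} explicitly,
\begin{equation*}
x(T) = e^{AT}x_0 + \int_0^T e^{A(T-s)}\big(Bu(s) + Cw(s)\big)\, ds, \qquad y(T) = e^{AT}x_0 + \int_0^T e^{A(T-s)} z(s)\, ds,
\end{equation*}
shows that the two trajectories reach the same point at time $T$ exactly when the two integrals coincide. The pointwise identity $Bu(s) + Cw(s) = z(s)$, i.e. $Bu(s) = z(s) - Cw(s)$, is the natural way to force this, and it is precisely the relation $\phi(w,t) = Bu(t) = z(t) - Cw(t)$ asserted in the statement. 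So the whole theorem amounts to matching the attainable values of the integral $\int_0^T e^{A(T-s)}(\cdot)\, ds$ for the two systems.

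For the easy implication (dual $\Rightarrow$ original) I would start from a signal $z(\cdot) \in \mathcal{Z}$ steering \eqref{eq: hajek simplification} to $x_{goal}$. By the definition of $\mathcal{Z}$ as the Pontryagin difference $B\mathcal{U} \ominus (-C\mathcal{W})$ (here $\ominus$ abbreviates the Pontryagin difference of the Notation section), for every $w$ and every $s$ one has $z(s) - Cw(s) \in B\mathcal{U}$, so some $u(s) \in \mathcal{U}$ satisfies $Bu(s) = z(s) - Cw(s)$. Setting $\phi(w,s) := Bu(s)$ makes the net input of \eqref{eq:splitted ODE} equal to $z(s)$ for every admissible $w$, whence $x(T) = y(T) = x_{goal}$ irrespective of $w$. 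The only nonroutine point is choosing $u(\cdot)$ piecewise continuously in $s$, which I would secure by selecting the minimum-norm preimage of $z(s) - Cw(s)$ under $B$, continuous wherever $z$ and $w$ are, hence piecewise continuous.

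The hard implication (original $\Rightarrow$ dual) is the crux. I am given that for every $w(\cdot)$ some $u(\cdot)$ reaches $x_{goal}$, and I must produce a single $z(\cdot) \in \mathcal{Z}$. Setting $\xi := x_{goal} - e^{AT}x_0$, robust reachability reads $\xi - \int_0^T e^{A(T-s)}Cw(s)\,ds \in K$ for all $w$, where $K := \int_0^T e^{A(T-s)} B\mathcal{U}\, ds$ is the convex, compact control-attainable set (an Aumann integral). Letting $L := \int_0^T e^{A(T-s)}(-C\mathcal{W})\,ds$ and noting $\xi \in K$ via $w \equiv 0$ (allowed since $0 \in \mathcal{W}$), this is equivalent to $\xi \in K \ominus L$, whereas the dual reachability I want is $\xi \in \int_0^T e^{A(T-s)}\big(B\mathcal{U} \ominus (-C\mathcal{W})\big)\, ds$. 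Thus the theorem collapses to the commutation
\begin{equation*}
\int_0^T e^{A(T-s)}\big(B\mathcal{U} \ominus (-C\mathcal{W})\big)\, ds = \Big(\int_0^T e^{A(T-s)} B\mathcal{U}\, ds\Big) \ominus \Big(\int_0^T e^{A(T-s)}(-C\mathcal{W})\, ds\Big)
\end{equation*}
of the Aumann integral with the Pontryagin difference.

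I expect this commutation, specifically its inclusion $\supseteq$, to be the main obstacle; the reverse inclusion $\subseteq$ is elementary and merely re-proves the easy implication. To establish $\supseteq$ I would disintegrate a net input realizing a point of $K \ominus L$ into a pointwise-admissible $z(\cdot) \in \mathcal{Z}$, which is where a measurable-selection theorem and the convexity of the integrand sets enter. One can also phrase the identity through support functions, using that the support function of an Aumann integral is the integral of the support functions; the subtlety is that the support function of a Pontryagin difference does not generally split as a difference of support functions, so I would lean on the zonotopal structure of $B\mathcal{U}$ and $-C\mathcal{W}$ (linear images of hyperrectangles) together with $0 \in \mathcal{W}$, which keeps $\mathcal{Z}$ nonempty, to force the needed equality. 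Since this computation is exactly the content of H\'ajek's result, I would ultimately either invoke it directly or reconstruct it along these lines.
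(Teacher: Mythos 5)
First, a point of reference: the paper does not prove Theorem~\ref{thm:hajek reciprocity} at all --- it imports it verbatim as H\'ajek's duality theorem, crediting the extension to nonzero targets to Rechtschaffen and the reformulation to Schmitendorf --- so there is no in-paper argument to compare yours against. Judged on its own terms, your proposal gets the architecture right. The sufficiency direction (dual system $\Rightarrow$ original system) is complete and is the standard argument: for $z(t) \in \mathcal{Z}$ the membership $z(t) - Cw(t) \in B\mathcal{U}$ holds pointwise for every admissible $w$, a minimum-norm selection yields an admissible $u(\cdot)$, and the net input of \eqref{eq:splitted ODE} then coincides with that of \eqref{eq: hajek simplification}; this also delivers the ``furthermore'' clause $\phi(w,t) = z(t) - Cw(t)$. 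Your reformulation of the converse as the commutation of the Aumann integral with the Pontryagin difference is exactly the right way to isolate the content of the theorem.

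The converse, however, is where the entire difficulty lives, and your proposal does not close it: you end by saying you would ``either invoke [H\'ajek's result] directly or reconstruct it along these lines.'' The inclusion you label $\supseteq$ is not a technicality --- it \emph{is} the theorem. The support-function route you sketch hits the obstruction you yourself name: the support function of a Pontryagin difference is in general only the largest sublinear minorant of the difference of support functions, and equality needs a summand-type condition that is not among the hypotheses; the measurable-selection route needs a Lyapunov-convexity argument for set-valued integrals that you do not supply. Two smaller points: your step $\xi \in K$ uses $0 \in \mathcal{W}$, which the paper treats as an extra assumption rather than a consequence of $\bar{\mathcal{U}}$ being a hyperrectangle, so it should be stated as a hypothesis; and since $\phi$ is a strategy, the converse must also upgrade ``for each fixed $w$ some open-loop $u$ works'' to the pointwise, nonanticipative law $\phi(w,t) = z(t) - Cw(t)$, which is again part of what H\'ajek's argument delivers. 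In short: correct skeleton, correct identification of the crux, but the crux is cited rather than proved --- which leaves your write-up at the same level of completeness as the paper, not ahead of it.
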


\vspace{1mm}

Theorem~\ref{thm:hajek reciprocity} transforms the problem of resilient reachability to one of \textit{bounded controllability}. Using Corollary 3.7 of \cite{Brammer} we thus trivially obtain the following resilience condition.

\vspace{1mm}
\begin{theorem}\label{thm: N&S resilience}
    If $0 \in \mathcal{Z}$ and $\interior(co(\mathcal{Z})) \neq \emptyset$, then system \eqref{eq:splitted ODE} is resilient if and only if $Re\big(\lambda(A)\big) = 0$ and there is no real eigenvector $v$ of $A^\top$ satisfying $v^\top z \leq 0$ for all $z \in \mathcal{Z}$.
\end{theorem}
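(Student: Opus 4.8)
The plan is to combine the H\'ajek Duality theorem (Theorem~\ref{thm:hajek reciprocity}) with the constrained-controllability characterization of Corollary~3.7 in \cite{Brammer}, applied to the dual system \eqref{eq: hajek simplification}. The first step recasts resilience as a complete controllability property of \eqref{eq: hajek simplification}; the second reads off the stated eigenvalue and eigenvector conditions directly from Brammer's result.

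First I would translate resilience into a statement about \eqref{eq: hajek simplification}. By Definition~2, system \eqref{eq:splitted ODE} is resilient if and only if every $x_{goal} \in \mathbb{R}^n$ is resiliently reachable from every $x_0 \in \mathbb{R}^n$. Fixing an arbitrary pair $(x_0, x_{goal})$, Theorem~\ref{thm:hajek reciprocity} asserts that $x_{goal}$ is resiliently reachable from $x_0$ by \eqref{eq:splitted ODE} if and only if the dual state can be driven from $x_0$ to $x_{goal}$ by some admissible signal $z(\cdot) \in \mathcal{Z}$. Quantifying over all pairs, resilience is therefore equivalent to the dual system \eqref{eq: hajek simplification} being completely controllable, i.e., able to steer any initial state to any target in finite time.

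It then remains to apply Brammer's characterization to \eqref{eq: hajek simplification}. Written in the standard form $\dot y = Ay + Iz$ with control constraint $z \in \mathcal{Z}$, its input matrix is the identity, so the Kalman rank condition $\rank[I\ AI\ \cdots\ A^{n-1}I] = n$ holds automatically, while the hypotheses $0 \in \mathcal{Z}$ and $\interior(co(\mathcal{Z})) \neq \emptyset$ are exactly the standing assumptions that Corollary~3.7 of \cite{Brammer} places on the constraint set. The remaining conditions of that corollary then specialize, with control set $\mathcal{Z}$, to $Re(\lambda(A)) = 0$ together with the nonexistence of a real eigenvector $v$ of $A^\top$ satisfying $v^\top z \leq 0$ for all $z \in \mathcal{Z}$, which is precisely the claimed equivalence.

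The anticipated difficulty is bookkeeping rather than conceptual: one must check that the dual system genuinely meets Brammer's hypotheses and that his notion of controllability (reaching any target from any state) coincides with resilience quantified over all $(x_0, x_{goal})$. A secondary subtlety lies in the universal quantifier over $w(\cdot)$ in Definition~1: the strength of Theorem~\ref{thm:hajek reciprocity} is that the master signal $z(\cdot)$ is chosen independently of $w$, with the realized control adapting through $Bu = z - Cw$, so a single controllability statement for \eqref{eq: hajek simplification} already encodes the worst-case requirement over every undesirable input. Because these points follow directly from the two cited results, the derivation is indeed ``trivial'' once the duality is in place.
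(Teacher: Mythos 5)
Your proposal matches the paper's own argument: the authors likewise obtain this theorem by using the H\'ajek duality theorem to recast resilience as constrained controllability of the dual system \eqref{eq: hajek simplification} and then reading off the conditions from Corollary~3.7 of \cite{Brammer}. The paper gives no further detail beyond this two-step reduction, so your elaboration (identity input matrix, hypotheses on $\mathcal{Z}$, quantification over $w$) is a faithful and slightly more explicit version of the same proof.
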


\vspace{1mm}
The first two conditions of Theorem~\ref{thm: N&S resilience} ensure that the controls can move the system in any direction of the state space. For any target to be resiliently reachable, the controls need to be always able to overcome the drift term, so the eigenvalues of $A$ must have a zero real part.
Let us deduce a simpler condition for resilience.

\vspace{1mm}

\begin{cor}\label{cor:sufficient}
    If $0 \in \interior(\mathcal{Z})$ and $Re\big(\lambda(A)\big) = 0$, then system \eqref{eq:splitted ODE} is resilient.
\end{cor}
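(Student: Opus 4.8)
The plan is to invoke Theorem~\ref{thm: N&S resilience} directly, so that the entire argument reduces to checking that the two hypotheses of the corollary imply the two structural assumptions of that theorem together with its second (negative) resilience condition. Since the condition $Re\big(\lambda(A)\big) = 0$ appears verbatim in both statements, it passes through unchanged, and all the genuine content lies in translating $0 \in \interior(\mathcal{Z})$ into the required geometric facts about $\mathcal{Z}$.

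First I would dispatch the two structural hypotheses of Theorem~\ref{thm: N&S resilience}. The condition $0 \in \mathcal{Z}$ is immediate from $\interior(\mathcal{Z}) \subseteq \mathcal{Z}$. For $\interior(co(\mathcal{Z})) \neq \emptyset$, I would use $\mathcal{Z} \subseteq co(\mathcal{Z})$, whence $\interior(\mathcal{Z}) \subseteq \interior(co(\mathcal{Z}))$; since the left-hand side contains $0$ by hypothesis, the right-hand side is nonempty. Both steps are routine monotonicity properties of the interior and the convex hull.

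The crux is the negative condition: I must show that $0 \in \interior(\mathcal{Z})$ forbids any real eigenvector $v$ of $A^\top$ with $v^\top z \leq 0$ for all $z \in \mathcal{Z}$. I would argue by contradiction. Suppose such a $v \neq 0$ exists. Because $0 \in \interior(\mathcal{Z})$, there is $\varepsilon > 0$ with $\mathbb{B}_\varepsilon(0) \subseteq \mathcal{Z}$. Then the point $z := \tfrac{\varepsilon}{2}\, v / \|v\|_2$ lies in this ball, hence in $\mathcal{Z}$, yet $v^\top z = \tfrac{\varepsilon}{2}\|v\|_2 > 0$, contradicting $v^\top z \leq 0$. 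Thus no such eigenvector exists. Combining this with the two structural hypotheses just verified and the given $Re\big(\lambda(A)\big) = 0$, Theorem~\ref{thm: N&S resilience} yields resilience of system~\eqref{eq:splitted ODE}.

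I do not expect a serious obstacle here: the eigenvector $v$ is nonzero by definition, and the probe direction $v/\|v\|_2$ is admissible the moment the interior ball is produced. The whole proof is essentially the observation that an interior point at the origin leaves no room for a supporting half-space through $0$, so the only subtlety is to phrase the contradiction cleanly rather than to overcome any real difficulty.
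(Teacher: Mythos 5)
Your proposal is correct and follows essentially the same route as the paper's own proof: verify the two structural hypotheses of Theorem~\ref{thm: N&S resilience} via $\interior(\mathcal{Z}) \subseteq \mathcal{Z} \subseteq co(\mathcal{Z})$, then use the interior ball $\mathbb{B}_\varepsilon(0) \subseteq \mathcal{Z}$ to produce a probe point $z$ proportional to any eigenvector $v$ of $A^\top$ with $v^\top z > 0$. The only cosmetic difference is that you phrase this last step as a contradiction and use radius $\varepsilon/2$ where the paper takes $\varepsilon$ directly; the argument is the same.
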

\begin{proof}
    Obviously $0 \in \mathcal{Z}$ and $\interior(co(\mathcal{Z})) \neq \emptyset$ because $\mathcal{Z} \subseteq co(\mathcal{Z})$. Let $v$ be any eigenvector of $A^\top$. Since $0 \in \interior(\mathcal{Z})$, there exists $\varepsilon > 0$ such that $\mathbb{B}_\varepsilon(0) \subseteq \mathcal{Z}$. Take $z = \varepsilon \frac{v}{\|v\|}$, then $v^\top z = \varepsilon \|v\| > 0$. Then, all the conditions of Theorem~\ref{thm: N&S resilience} are satisfied, so system \eqref{eq:splitted ODE} is resilient.
\end{proof}

\vspace{1mm}

Corollary~\ref{cor:sufficient} states that if the controls are stronger than all undesirable inputs, and if the drift does not increase exponentially, then the system is resilient.
For driftless systems, as studied in \cite{SIAM_CT} and \cite{Quantitative_Resilience}, Corollary~\ref{cor:sufficient} simplifies to

\vspace{1mm}

\begin{cor}\label{cor:driftless}
    Let $A = 0$ and assume that $0 \in \interior(\mathcal{Z})$. Then, driftless system \eqref{eq:splitted ODE} is resilient.
\end{cor}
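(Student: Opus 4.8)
The plan is to treat this statement as an immediate specialization of Corollary~\ref{cor:sufficient}, since the driftless hypothesis is the strongest possible way of meeting that corollary's spectral requirement. First I would record that when $A = 0$ the characteristic polynomial is $\det(sI - A) = s^n$, so the only eigenvalue is $0$ and $\lambda(A) = \{0\}$. In particular every eigenvalue has vanishing real part, so the condition $Re\big(\lambda(A)\big) = 0$ demanded by Corollary~\ref{cor:sufficient} holds automatically.

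With that observation in hand, both hypotheses of Corollary~\ref{cor:sufficient} are satisfied: the assumption $0 \in \interior(\mathcal{Z})$ is given outright, and $Re\big(\lambda(A)\big) = 0$ follows from $A = 0$ as above. Applying Corollary~\ref{cor:sufficient} then yields that system \eqref{eq:splitted ODE} is resilient, which completes the argument. There is no genuine obstacle here; the only point to verify is the trivial spectral fact $\lambda(0) = \{0\}$, which is exactly why this corollary reads as a direct restriction of the preceding one to the driftless case.
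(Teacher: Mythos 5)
Your proposal is correct and matches the paper's (implicit) argument exactly: the paper presents Corollary~\ref{cor:driftless} as a direct specialization of Corollary~\ref{cor:sufficient}, with the driftless hypothesis $A=0$ giving $\lambda(A)=\{0\}$ and hence $Re\big(\lambda(A)\big)=0$ for free. Nothing further is needed.
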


\vspace{1mm}

Similarly, if a driftless system is resilient, then $-C\mathcal{W} \subseteq \interior(B\mathcal{U})$ and $\rank(B) = n$ \cite{Quantitative_Resilience}. We show that this condition is also sufficient for resilience due to the compactness of $\bar{\mathcal{U}}$.

\vspace{1mm}

\begin{proposition}\label{prop:compactness}
     \hspace{-2mm} If \hspace{-1mm} $-C\mathcal{W} \hspace{-0.2mm} \subseteq \hspace{-.5mm} \interior(B\mathcal{U})$ \hspace{-1.2mm} and \hspace{-1mm} $\rank(B) = n$, then $0 \hspace{-.3mm} \in \hspace{-.3mm} \interior(\mathcal{Z})$ \hspace{-2mm}.
\end{proposition}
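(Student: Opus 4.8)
The plan is to produce an explicit radius $\varepsilon > 0$ with $\mathbb{B}_\varepsilon(0) \subseteq \mathcal{Z}$, which is exactly $0 \in \interior(\mathcal{Z})$. The first move is to simplify the membership condition for $\mathcal{Z}$. Since the hyperrectangle $\bar{\mathcal{U}}$ models actuators that may be idle, it contains the origin, so $0 \in \mathcal{W}$; this lets me invoke the characterization recorded after \eqref{eq: hajek simplification}, namely that $z \in \mathcal{Z}$ if and only if $z - Cw \in B\mathcal{U}$ for every $w \in \mathcal{W}$. Rewriting this as a Minkowski containment, $z \in \mathcal{Z}$ if and only if $z + (-C\mathcal{W}) \subseteq B\mathcal{U}$, and the separate requirement $z \in B\mathcal{U}$ in the Pontryagin difference is then automatic (take $w = 0$). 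So the entire statement reduces to exhibiting $\varepsilon$ with $\mathbb{B}_\varepsilon(0) + (-C\mathcal{W}) \subseteq B\mathcal{U}$. I flag here that $0 \in \mathcal{W}$ is genuinely needed: without the origin inside the control sets one can arrange $0 \notin B\mathcal{U}$ with the hypotheses still holding, making $\mathcal{Z}$ empty.

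The crux, and the place where ``the compactness of $\bar{\mathcal{U}}$'' in the surrounding text is essential, is a uniform-margin estimate. The set $-C\mathcal{W}$ is the continuous image of the compact set $\mathcal{W}$, hence compact, and by hypothesis it lies in the open set $\interior(B\mathcal{U})$; here $\rank(B) = n$ guarantees that $B\mathcal{U}$ is full dimensional, so this interior is a genuine open subset of $\mathbb{R}^n$ and the balls below make sense. Because $B\mathcal{U}$ is compact (the image of the compact $\mathcal{U}$), its complement is open, and the map $k \mapsto \mathrm{dist}(k, \mathbb{R}^n \setminus B\mathcal{U})$ is continuous and strictly positive at every $k \in -C\mathcal{W}$. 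A continuous positive function on a compact set attains a positive minimum; I would take $\varepsilon$ to be that minimum, which yields $\mathbb{B}_\varepsilon(k) \subseteq B\mathcal{U}$ for every $k \in -C\mathcal{W}$, i.e. $(-C\mathcal{W}) + \mathbb{B}_\varepsilon(0) \subseteq B\mathcal{U}$.

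It then remains only to read off the conclusion: for any $z$ with $\|z\|_2 \le \varepsilon$ and any $w \in \mathcal{W}$, the point $z - Cw = (-Cw) + z$ lies in $(-C\mathcal{W}) + \mathbb{B}_\varepsilon(0) \subseteq B\mathcal{U}$, so $z \in \mathcal{Z}$ by the characterization above. Hence $\mathbb{B}_\varepsilon(0) \subseteq \mathcal{Z}$, giving $0 \in \interior(\mathcal{Z})$. I expect the uniform-margin step to be the only real obstacle: compactness of $\mathcal{W}$ is precisely what upgrades the pointwise fact ``each $-Cw$ is interior to $B\mathcal{U}$'' into a single radius $\varepsilon$ valid for all $w$ simultaneously. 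Were $\mathcal{W}$ noncompact, the points $-Cw$ could crowd the boundary $\partial(B\mathcal{U})$, the infimum of the individual margins could vanish, and no common ball would survive.
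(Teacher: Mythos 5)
Your proof is correct and follows essentially the same route as the paper's: both arguments use compactness of $-C\mathcal{W}$ together with the hypothesis $-C\mathcal{W} \subseteq \interior(B\mathcal{U})$ to extract a uniform margin $\varepsilon > 0$ with $(-C\mathcal{W}) + \mathbb{B}_\varepsilon(0) \subseteq B\mathcal{U}$, and then read off $\mathbb{B}_\varepsilon(0) \subseteq \mathcal{Z}$. Your explicit remark that $0 \in \mathcal{W}$ is needed to absorb the side condition $z \in B\mathcal{U}$ in the Pontryagin difference is a point the paper leaves implicit, but it does not alter the argument.
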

\begin{proof}
    The compactness of $\bar{\mathcal{U}}$ implies that of $\mathcal{U}$ and $\mathcal{W}$, as well as of $B\mathcal{U}$ and $C\mathcal{W}$.
    By assumption, $-C\mathcal{W}$ and $\partial B\mathcal{U}$ are disjoint. Since they are both compact, there is a positive distance between them \cite{measure}, in the sense that there exists $\delta > 0$ such that $(-C\mathcal{W}) + \mathbb{B}_\delta(0) \subseteq B\mathcal{U}$, because $\rank(B) = n$. Thus, for all $w \in \mathcal{W}$ and all $z \in \mathbb{B}_\delta(0)$, $-Cw + z \in B\mathcal{U}$, i.e., $z \in \mathcal{Z}$. Then, $\mathbb{B}_\delta(0) \subseteq \mathcal{Z}$, i.e., $0 \in \interior(\mathcal{Z})$.
\end{proof}

\vspace{1mm}

Proposition~\ref{prop:compactness} simplifies Corollary~\ref{cor:sufficient} as the set $\mathcal{Z}$ is more difficult to characterize than $B\mathcal{U}$ and $C\mathcal{W}$. The price of this simplification is that the system needs full actuation.
We will now focus on the classical case of resilience to $x_{goal} = 0$, as in \cite{Sakawa}.

\vspace{1mm}

\begin{definition}
     System \eqref{eq:splitted ODE} is \emph{resiliently stabilizable} if for all $x_0 \in \mathbb{R}^n$ and all piecewise continuous $w(\cdot) \in \mathcal{W}$, there exists $T > 0$ and piecewise continuous $u(\cdot) \in \mathcal{U}$ such that the solution to \eqref{eq:splitted ODE} exists, is unique and $x(T) = 0$.
\end{definition}

\vspace{1mm}

Using Corollary 3.6 of \cite{Brammer} we trivially transform Theorem~\ref{thm:hajek reciprocity} into the following resilient stability condition.

\vspace{1mm}
\begin{theorem}\label{thm: N&S resiliently stabilizable}
    If $0 \in \mathcal{Z}$ and $\interior(co(\mathcal{Z})) \neq \emptyset$, then system \eqref{eq:splitted ODE} is resiliently stabilizable if and only if $Re \big(\lambda(A) \big) \leq 0$ and there is no real eigenvector $v$ of $A^\top$ satisfying $v^\top z \leq 0$ for all $z \in \mathcal{Z}$.
\end{theorem}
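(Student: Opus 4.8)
The plan is to mirror the derivation of Theorem~\ref{thm: N&S resilience}, replacing full controllability by controllability to the origin. Since resilient stabilizability is exactly resilient reachability of the single target $x_{goal} = 0$ from every $x_0$, I would first invoke the H\'ajek Duality Theorem (Theorem~\ref{thm:hajek reciprocity}) with $x_{goal} = 0$ to convert the problem into a bounded-control question for the dual system \eqref{eq: hajek simplification}. The entire argument is a translation: resilient stabilizability of \eqref{eq:splitted ODE} $\longleftrightarrow$ null-controllability of \eqref{eq: hajek simplification} $\longleftrightarrow$ the stated spectral and eigenvector conditions.

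Concretely, the first step is to observe that, for a fixed $x_0$, the state of \eqref{eq:splitted ODE} can be driven to $0$ at some time $T$ for all $w(\cdot) \in \mathcal{W}$ if and only if the state of \eqref{eq: hajek simplification} can be driven to $0$ at time $T$ by some $z(\cdot) \in \mathcal{Z}$. Quantifying over all $x_0 \in \mathbb{R}^n$, resilient stabilizability of \eqref{eq:splitted ODE} is therefore equivalent to null-controllability of the bounded linear system \eqref{eq: hajek simplification}, namely the ability to steer every initial state to the origin in finite time using inputs $z(\cdot) \in \mathcal{Z}$.

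The second step is to apply Corollary 3.6 of \cite{Brammer}, the null-controllability counterpart of the full-controllability Corollary 3.7 used for Theorem~\ref{thm: N&S resilience}. Under the standing hypotheses $0 \in \mathcal{Z}$ and $\interior(co(\mathcal{Z})) \neq \emptyset$ --- which place the origin in the admissible input set and guarantee that $\mathcal{Z}$ is full-dimensional, hence spans all of $\mathbb{R}^n$ --- that corollary characterizes null-controllability by the relaxed spectral condition $Re(\lambda(A)) \leq 0$ together with the same nondegeneracy condition that no real eigenvector $v$ of $A^\top$ satisfies $v^\top z \leq 0$ for all $z \in \mathcal{Z}$. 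Substituting these back through the duality equivalence yields the theorem. The relaxation from $Re(\lambda(A)) = 0$ to $Re(\lambda(A)) \leq 0$ is precisely what distinguishes steering to the origin from steering to an arbitrary target: stable modes decay toward $0$ on their own and need not be actively counteracted.

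The only point demanding care, rather than a genuine obstacle, is the matching of quantifiers across the duality theorem. The definition of resilient stabilizability requires the control $u(\cdot)$, equivalently $z(\cdot)$, and the terminal time $T$ to depend on $x_0$ and on the undesirable input $w(\cdot)$, so I would check that both Theorem~\ref{thm:hajek reciprocity} and Brammer's corollary accommodate a time $T$ varying with $x_0$ and a control reacting to $w$. Because \eqref{eq: hajek simplification} has already absorbed the worst-case $w$ into the set $\mathcal{Z}$ via the Pontryagin difference, the resulting null-controllability statement is entirely free of $w$; this makes the verification routine and renders the derivation as immediate as the paper claims.
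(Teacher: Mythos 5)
Your proposal matches the paper's own argument exactly: the paper likewise obtains this theorem by applying the H\'ajek duality theorem (Theorem~\ref{thm:hajek reciprocity}) with $x_{goal} = 0$ to reduce resilient stabilizability to null-controllability of the dual system \eqref{eq: hajek simplification}, and then invoking Corollary 3.6 of \cite{Brammer} for the bounded-control null-controllability characterization. Your additional remarks on the quantifier matching and on why the spectral condition relaxes to $Re(\lambda(A)) \leq 0$ are consistent with the paper's discussion and require no correction.
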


\vspace{1mm}

As before, a simpler sufficient condition can be derived with an analoguous proof.

\vspace{1mm}

\begin{cor}\label{cor:resilience to zero}
    If $0 \in \interior(\mathcal{Z})$ and $Re\big( \lambda (A) \big) \leq 0$, then system \eqref{eq:splitted ODE} is resiliently stabilizable.
\end{cor}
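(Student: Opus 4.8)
The plan is to mirror the proof of Corollary~\ref{cor:sufficient}, replacing the appeal to Theorem~\ref{thm: N&S resilience} with an appeal to Theorem~\ref{thm: N&S resiliently stabilizable}, since the two corollaries differ only in the spectral requirement (the nonstrict $Re(\lambda(A)) \leq 0$ here versus $Re(\lambda(A)) = 0$ there). First I would verify the two blanket hypotheses shared by both theorems. Because $0 \in \interior(\mathcal{Z})$, in particular $0 \in \mathcal{Z}$; and since $\mathcal{Z} \subseteq co(\mathcal{Z})$ we have $\emptyset \neq \interior(\mathcal{Z}) \subseteq \interior(co(\mathcal{Z}))$, so the interior of the convex hull is nonempty.

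Next I would rule out a real eigenvector $v$ of $A^\top$ satisfying $v^\top z \leq 0$ for all $z \in \mathcal{Z}$. From $0 \in \interior(\mathcal{Z})$ there is $\varepsilon > 0$ with $\mathbb{B}_\varepsilon(0) \subseteq \mathcal{Z}$. For any eigenvector $v \neq 0$, the point $z = \varepsilon \frac{v}{\|v\|}$ then lies in $\mathcal{Z}$ and yields $v^\top z = \varepsilon \|v\| > 0$, which contradicts $v^\top z \leq 0$. Hence no such eigenvector exists, and the only remaining hypothesis of Theorem~\ref{thm: N&S resiliently stabilizable} is $Re(\lambda(A)) \leq 0$, which is precisely the assumption of the corollary. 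All conditions being met, I would conclude that system \eqref{eq:splitted ODE} is resiliently stabilizable.

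The proof presents no genuine obstacle; the single point that requires care is invoking Theorem~\ref{thm: N&S resiliently stabilizable} rather than Theorem~\ref{thm: N&S resilience}, so that the weaker spectral condition suffices. Intuitively, stabilizing to the origin tolerates drift that is either marginally stable or decaying, whereas reaching arbitrary targets demanded the exact balance $Re(\lambda(A)) = 0$; everything else in the argument is identical to the proof of Corollary~\ref{cor:sufficient}.
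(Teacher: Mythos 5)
Your proposal is correct and follows exactly the route the paper intends: the paper states that Corollary~\ref{cor:resilience to zero} "can be derived with an analogous proof" to that of Corollary~\ref{cor:sufficient}, and your argument is precisely that analogue, checking $0 \in \mathcal{Z}$, $\interior(co(\mathcal{Z})) \neq \emptyset$, and the eigenvector condition before invoking Theorem~\ref{thm: N&S resiliently stabilizable} with the relaxed spectral hypothesis $Re(\lambda(A)) \leq 0$. No issues.
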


\vspace{1mm}


We have now established several conditions to verify the resilience of a linear system: Problem~\ref{prob:resilience} is solved. 
We now proceed to the more challenging question of the quantitative resilience of our system.

\section{QUANTITATIVE RESILIENCE}\label{sec:quantitative}

We consider unit bounded controls, $\bar{\mathcal{U}} = [-1,1]^{m+p}$, in line with previous works \cite{LaSalle, Eaton, Ho, Fujisawa, Grognard_real, Grognard_complex, Romano}. Note that scaling column $i$ of $\bar{B}$ enables a straightforward extension to symmetric input sets: $\bar{u}_i(t) \in [-\bar{u}_i^{max}, \bar{u}_i^{max}]$.

Unlike with driftless systems \cite{SIAM_CT}, the nominal and malfunctioning reach times $T_N^*$ and $T_M^*$ are not homogeneous in $x_0$, i.e., $T_N^*(\alpha x_0) \neq |\alpha| T_N^*(x_0)$ for $\alpha \in \mathbb{R}$. Indeed, in the driftless case $x(T) = x_0 + T \bar{B}\bar{u}$ for a constant $\bar{u}$, while for general linear systems $x(T) = e^{AT}x_0 + \int_0^T e^{A(T-t)} \bar{B}\bar{u}(t) dt$ is not linear in $T$. Thus, the optimization domain of the infimum in \eqref{eq:r_q} cannot be scaled down to $\mathbb{S}$ as in \cite{SIAM_CT}.

The loss of homogeneity in $x_0$ makes the determination of $r_q$ a much harder task than in the driftless case, which already required considerable amounts of calculations \cite{Quantitative_Resilience, Maximax_Minimax_JOTA}. We start by investigating the nominal reach time.

\subsection{Nominal reach time}

In order to calculate $r_q$, we need $T_N^*(x_0)$ for all $x_0 \in \mathbb{R}^n$. However, $T_N^*(x_0)$ has no general closed-form solution \cite{Athans}, and it cannot be computed for all $x_0 \in \mathbb{R}^n$. Thus, we establish analytical bounds on $T_N^*$ so that we can approximate $r_q$.

\vspace{1mm}

\begin{proposition}\label{prop: T_N lb b_max}
    If \eqref{eq:initial ODE} is stabilizable and $A$ is Hurwitz, then
    \begin{equation}\label{eq: T_N lb b_max}
        T_N^*(x_0)\ \geq\ 2\frac{\lambda_{min}^P}{\lambda_{max}^Q}\ln \Bigg(1 + \frac{\lambda_{max}^Q \|x_0\|_P}{2 \lambda_{min}^P b_{max}^P }\Bigg),
    \end{equation}
    for any $P \succ 0$ and $Q \succ 0$ such that $PA + A^\top P = -Q$ and with $b_{max}^P := \max\big\{ \|\bar{B}\bar{u}\|_P : \bar{u} \in \bar{\mathcal{U}}\big\}$.
\end{proposition}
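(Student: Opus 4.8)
The plan is to use the $P$-norm itself as an energy function and to bound how fast it can possibly decrease along any admissible trajectory, thereby lower-bounding the time needed to drive the state to the origin. Concretely, I would set $V(x) = \|x\|_P^2 = x^\top P x$ and $r(t) = \|x(t)\|_P = \sqrt{V(x(t))}$, so that reaching $x_{goal} = 0$ is equivalent to driving $r$ to $0$. Differentiating $V$ along \eqref{eq:initial ODE} and substituting the Lyapunov equation $PA + A^\top P = -Q$ gives $\dot V = -x^\top Q x + 2\,x^\top P \bar{B}\bar{u}$. The first step is to bound each term from below, uniformly over all admissible inputs $\bar{u} \in \bar{\mathcal{U}}$, since we need a bound valid for the (unknown) time-optimal control.

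For the quadratic term I would combine $x^\top Q x \leq \lambda_{max}^Q \|x\|_2^2$ with $\|x\|_P^2 \geq \lambda_{min}^P \|x\|_2^2$ to obtain $x^\top Q x \leq (\lambda_{max}^Q/\lambda_{min}^P)\, r^2$. For the input term, Cauchy--Schwarz in the $P$-inner product $\langle a, b\rangle_P = a^\top P b$ yields $|x^\top P \bar{B}\bar{u}| \leq \|x\|_P\,\|\bar{B}\bar{u}\|_P \leq r\, b_{max}^P$, using the definition of $b_{max}^P$. Together these give $\dot V \geq -(\lambda_{max}^Q/\lambda_{min}^P)\, r^2 - 2\, b_{max}^P\, r$. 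Writing $\dot V = 2 r \dot r$ and dividing by $2r > 0$ (valid while $x \neq 0$) collapses this into the scalar differential inequality $\dot r \geq -a r - c$, with $a = \lambda_{max}^Q/(2\lambda_{min}^P)$ and $c = b_{max}^P$.

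The final step is a comparison argument. Let $\rho$ solve the associated equality ODE $\dot\rho = -a\rho - c$ with $\rho(0) = \|x_0\|_P$; its explicit solution $\rho(t) = (\|x_0\|_P + c/a)e^{-at} - c/a$ is strictly decreasing and reaches $0$ exactly at $T = \tfrac{1}{a}\ln\!\big(1 + a\|x_0\|_P/c\big)$. Since $-a r - c$ is Lipschitz in $r$ and $\dot r \geq -a r - c$ holds, the comparison lemma gives $r(t) \geq \rho(t) > 0$ for all $t < T$. Hence $x(t) \neq 0$ on $[0, T)$, so any admissible trajectory needs at least time $T$ to reach the origin, i.e. $T_N^*(x_0) \geq T$; substituting $a$ and $c$ recovers \eqref{eq: T_N lb b_max}.

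I expect the main obstacle to be the rigorous handling of the comparison step, since $r$ is only piecewise $C^1$ (the control is piecewise continuous and $r$ is nonsmooth where $x = 0$): I would phrase the argument through an integrated Gronwall-type inequality holding almost everywhere, restricted to the interval before $x$ could first vanish, rather than pointwise. The bounding choices in the second step are the other delicate point, as passing to $r = \sqrt{V}$ instead of working with $V$ directly is precisely what converts an otherwise Riccati-type inequality into the linear one that integrates to the clean logarithm.
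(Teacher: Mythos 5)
Your proposal is correct and follows essentially the same route as the paper: the same Lyapunov function $V(x)=\|x\|_P^2$, the same Cauchy--Schwarz and eigenvalue bounds yielding $\dot r \geq -\alpha r - \beta$ for $r=\|x\|_P$ with $\alpha = \lambda_{max}^Q/(2\lambda_{min}^P)$ and $\beta = b_{max}^P$, and the same resulting logarithmic time bound. The only cosmetic difference is the final integration step, where the paper integrates a Bihari-type function $G(r)=\int_{r_0}^{r} ds/w(s)$ along the trajectory instead of invoking the comparison lemma against the explicit solution of $\dot\rho=-\alpha\rho-\beta$; the two computations are equivalent.
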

\begin{proof}
    Since $A$ is Hurwitz, by the Lyapunov theorem \cite{Kalman}, for any $Q \succ 0$ there is $P \succ 0$ such that $PA + A^\top P = -Q$. Let us consider any such pair $(P,Q)$. We define the Lyapunov function $V(x) := x^\top P x = \|x\|_P^2$.
    Then,
    \begin{align*}
        \dot V(x) &= \dot x^\top P x + x^\top P \dot x = x^\top (A^\top P + PA)x + 2 x^\top P\bar{B}\bar{u} \\
        &= -x^\top Q x + 2 x^\top P\bar{B}\bar{u}.
    \end{align*}
    Since $P \succ 0$, there exists $M \in \mathbb{R}^{n \times n}$ such that $P = M^\top M$ \cite{Matrix}. Then, $x^\top P \bar{B}\bar{u} = (Mx)^\top M \bar{B} \bar{u} \geq - \|Mx\|_2 \|M\bar{B} \bar{u}\|_2$, by the Cauchy-Schwarz inequality \cite{Matrix}. 
    Note that $\|Mx\|_2^2 = x^\top M^\top M x = x^\top P x = \|x\|_P^2$. Similarly, $\|M \bar{B} \bar{u}\|_2 = \|\bar{B} \bar{u}\|_P$.
    
    The maximum in $b_{max}^P$ exists since $\bar{\mathcal{U}}$ is compact and the map $:\bar{u} \mapsto \|\bar{B}\bar{u}\|_P$ is continuous. Since $Q \succ 0$, we have $x^\top Qx \leq \lambda_{max}^Q \|x\|_2^2$ and $\|x\|_2^2 \leq \|x\|_P^2 / \lambda_{min}^P$ because $P \succ 0$. Finally, for $x \neq 0$ we obtain
    \begin{equation}\label{eq:V dot lb}
        \dot V(x) = \frac{d}{dt} \|x\|_P^2 \geq -\frac{\lambda_{max}^Q}{\lambda_{min}^P} \|x\|_P^2 - 2 b_{max}^P \|x\|_P.
    \end{equation}
    Let $y := \|x\|_P$, $\alpha := \frac{\lambda_{max}^Q}{2\lambda_{min}^P} > 0$ and $\beta := b_{max}^P > 0$. We divide \eqref{eq:V dot lb} by $2y$ so that $\dot y \geq w(y) := -\alpha y - \beta$.
   Then, inspired from the proof of the Bihari inequality in \cite{inequalities}, for $r \geq r_0 := \frac{-\beta}{1+\alpha}$, define
    \begin{align*}
        G(r) &:= \int_{r_0}^r \frac{ds}{w(s)} = \int_{r_0}^r \frac{-ds}{\alpha s + \beta} = \frac{-1}{\alpha} \Big[ \ln (\alpha s + \beta) \Big]_{r_0}^{r} \\
        &= \frac{-1}{\alpha} \ln\left( \frac{\alpha r + \beta}{\alpha r_0 + \beta}\right).
    \end{align*}
    The integral is well-defined since $\alpha s + \beta > 0$ for $s \in [r_0, r]$.
    Note that $\frac{d}{dt} G\big( y(t) \big) = \frac{\dot y(t)}{w( y(t) )} \leq 1$ because $\dot y(t) \geq w(y(t))$ and $w(y(t)) < 0$ since $\alpha > 0$ and $\beta > 0$. Thus, for any $T > 0$,
    \begin{equation}\label{eq:G lb}
        G\big( y(T) \big) - G\big( y(0) \big) = \int_0^T \frac{d}{dt} G\big( y(t) \big)\, dt \leq \int_0^T \hspace{-2mm} 1\, dt = T.
    \end{equation}
    
    Because $\bar{\mathcal{U}}$ is compact and convex, and \eqref{eq:initial ODE} is stabilizable, there exists a time-optimal control signal $\bar{u}^*(\cdot) \in \bar{\mathcal{U}}$ driving the state from $x_0$ to the origin in time $T_N^*(x_0)$ \cite{Liberzon}. Then, applying \eqref{eq:G lb} to this trajectory yields
    \begin{align*}
        T_N^*(x_0) &\geq \frac{-1}{\alpha} \ln\left( \frac{\alpha \|0\|_P + \beta}{\alpha r_0 + \beta}\right) + \frac{1}{\alpha} \ln\left( \frac{\alpha \|x_0\|_P + \beta}{\alpha r_0 + \beta}\right) \nonumber \\
        &\geq \frac{1}{\alpha} \ln \left( \hspace{-1mm} 1 + \frac{\alpha}{\beta}\|x_0\|_P \hspace{-1mm} \right) = 2\frac{\lambda_{min}^P}{\lambda_{max}^Q}\ln \Bigg(1 + \frac{\lambda_{max}^Q \|x_0\|_P}{2 \lambda_{min}^P b_{max}^P }\Bigg).
    \end{align*}
\end{proof}

We now want to find an upper bound to $T_N^*(x_0)$ following a similar procedure. 

\vspace{1mm}

\begin{proposition}\label{prop: T_N ub b_min}
    If $\rank(\bar{B}) = n$ and $A$ is Hurwitz, then 
    \begin{equation}\label{eq: T_N ub b_min}
        T_N^*(x_0)\ \leq\ 2\frac{\lambda_{max}^P}{\lambda_{min}^Q}\ln \Bigg(1 + \frac{\lambda_{min}^Q \|x_0\|_P}{2 \lambda_{max}^P b_{min}^P }\Bigg),
    \end{equation}
    for any $P \succ 0$ and $Q \succ 0$ such that $PA + A^\top P = -Q$ and with $b_{min}^P := \min\big\{ \| \bar{B}\bar{u} \|_P : \bar{u} \in \partial\bar{\mathcal{U}} \big\}$.
\end{proposition}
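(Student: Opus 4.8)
The plan is to mirror the proof of Proposition~\ref{prop: T_N lb b_max}, but to reverse every inequality so as to bound $\dot V$ from above and, crucially, to replace the appeal to a time-optimal control by the explicit construction of a (generally suboptimal) stabilizing feedback. Since $T_N^*(x_0)$ is an infimum over admissible controls, exhibiting a single piecewise-continuous control that drives $x_0$ to the origin in time at most the right-hand side of \eqref{eq: T_N ub b_min} already proves the claim. I keep the same Lyapunov function $V(x) = \|x\|_P^2$ with $PA + A^\top P = -Q$, so that $\dot V(x) = -x^\top Q x + 2 x^\top P \bar{B}\bar{u}$, and I now choose $\bar{u}$ so as to make $\dot V$ as negative as possible.

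First I would bound the drift term from above by $-x^\top Q x \leq -\lambda_{min}^Q \|x\|_2^2 \leq -\frac{\lambda_{min}^Q}{\lambda_{max}^P}\|x\|_P^2$, using $\|x\|_P^2 \leq \lambda_{max}^P\|x\|_2^2$. The heart of the argument is the control term. Writing $P = M^\top M$ as in the previous proof, I have $x^\top P \bar{B}\bar{u} = (Mx)^\top (M\bar{B}\bar{u})$. Because $\rank(\bar{B}) = n$ and $M$ is invertible, $M\bar{B}$ maps $\mathbb{R}^{m+p}$ onto $\mathbb{R}^n$, so the ray from the origin in the direction $-Mx$ meets $\partial(M\bar{B}\bar{\mathcal{U}})$ at a point $M\bar{B}\bar{u}_0$ whose preimage can be taken on $\partial\bar{\mathcal{U}}$ (a surjective linear map carries boundary points to boundary points). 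Since $M\bar{B}\bar{u}_0$ is a positive multiple of $-Mx$, for this $\bar{u}_0 \in \partial\bar{\mathcal{U}}$ one gets $x^\top P\bar{B}\bar{u}_0 = -\|Mx\|_2\,\|M\bar{B}\bar{u}_0\|_2 = -\|x\|_P\,\|\bar{B}\bar{u}_0\|_P \leq -b_{min}^P\,\|x\|_P$, which is precisely where $b_{min}^P$, the minimum over $\partial\bar{\mathcal{U}}$, enters.

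Substituting both bounds yields the closed-loop inequality $\frac{d}{dt}\|x\|_P^2 \leq -\frac{\lambda_{min}^Q}{\lambda_{max}^P}\|x\|_P^2 - 2 b_{min}^P\|x\|_P$. Setting $y := \|x\|_P$, $\alpha := \frac{\lambda_{min}^Q}{2\lambda_{max}^P}$ and $\beta := b_{min}^P$, and dividing by $2y$, gives $\dot y \leq -\alpha y - \beta$, the mirror image of the inequality in Proposition~\ref{prop: T_N lb b_max}. I would then reuse the same Bihari-type integration: with $G(r) = \int \frac{ds}{-\alpha s - \beta}$ and $\frac{d}{dt}G(y(t)) \geq 1$, integrating from $0$ to the reach time $T$ and using $y(T) = 0$ produces $T \leq \frac{1}{\alpha}\ln\big(1 + \frac{\alpha}{\beta}\|x_0\|_P\big)$, which is exactly \eqref{eq: T_N ub b_min} after resubstituting $\alpha$ and $\beta$. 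Since $\dot y \leq -\beta < 0$ whenever $y > 0$, the trajectory reaches the origin in finite time, so the bound is attained by an admissible trajectory.

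The main obstacle is the construction in the second paragraph: making rigorous that the selected control lies on $\partial\bar{\mathcal{U}}$ and realizes the bound $b_{min}^P$ for every $x \neq 0$, and then verifying that the resulting feedback $\bar{u}(x)$ generates a bona fide admissible solution of \eqref{eq:initial ODE} — including near the origin, where the direction $-Mx/\|Mx\|_2$ degenerates. I expect this to require either a measurable-selection or continuity argument for the feedback, or a limiting argument approximating the feedback trajectory by admissible open-loop controls, analogous to how Proposition~\ref{prop: T_N lb b_max} invoked existence of an optimal control for the convex compact input set $\bar{\mathcal{U}}$.
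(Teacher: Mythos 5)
Your proposal follows essentially the same route as the paper's proof: the same Lyapunov function $V(x)=\|x\|_P^2$, an explicit suboptimal feedback aligned with $-x$ whose $P$-norm is at least $b_{min}^P$ (the paper takes $\bar{B}\bar{u} = -\frac{x}{\|x\|_P}b_{min}^P$ and shows admissibility by scaling a hypothetical out-of-range preimage back to $\partial\bar{\mathcal{U}}$, which is the contrapositive of your boundary-selection argument), the same differential inequality $\dot y\leq-\alpha y-\beta$, and the same Bihari-type integration yielding $T \leq \frac{1}{\alpha}\ln\big(1+\frac{\alpha}{\beta}\|x_0\|_P\big)$. The only substantive difference is that the paper justifies finite-time arrival at the origin via an explicit comparison solution of $\dot v=-2\alpha v-2\beta\sqrt{v}$ together with a cited finite-time-stability lemma, whereas you use the cruder but equally valid observation that $\dot y\leq-\beta<0$; both suffice.
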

\begin{proof}
    Since $A$ is Hurwitz, by the Lyapunov theorem, for any $Q \succ 0$ there exists $P \succ 0$ such that $PA + A^\top P = -Q$. Let $V(x) := x^\top P x$. As in the proof of Proposition~\ref{prop: T_N lb b_max}, $\dot V(x) = -x^\top Q x + 2 x^\top P\bar{B}\bar{u}$. The minimum in $b_{min}^P$ exists since the map $: \bar{u} \mapsto \|\bar{B}\bar{u}\|_P$ is continuous and $\partial \bar{\mathcal{U}}$ is compact. Because $\rank(\bar{B}) = n$, we can choose $\bar{B}\bar{u}(t) = -\frac{x(t)}{\|x(t)\|_P} b_{min}^P$ for $x(t) \neq 0$.
    Indeed, assume for contradiction purposes that $\bar{u} \notin \bar{\mathcal{U}}$. Then, $\|\bar{u}\|_\infty > 1$. Let $\hat{u} := \frac{\bar{u}}{\|\bar{u}\|_\infty}$. Then, $\| \hat{u} \|_\infty = 1$, so $\hat{u} \in \partial \bar{\mathcal{U}}$, but $\| \bar{B} \hat{u}\|_P = \frac{\|\bar{B}\bar{u}\|_P}{\|\bar{u}\|_\infty} = \frac{b_{min}^P}{\|\bar{u}\|_\infty} < b_{min}^P$, which is a contradiction. Hence, the proposed control signal is admissible.
    However, it is not always time-optimal and thus only leads to an upper bound of $T_N^*(x_0)$. We obtain $2x^\top P\bar{B}\bar{u} = -2 b_{min}^P \|x\|_P$, so that
    \begin{equation}\label{eq:V dot ub}
        \frac{d}{dt} \|x\|_P^2 = \dot V(x) \leq -\frac{\lambda_{min}^Q}{\lambda_{max}^P} \|x\|_P^2 - 2 b_{min}^P \|x\|_P.
    \end{equation}
    Let $y := \|x\|_P$, $\alpha := \frac{\lambda_{min}^Q}{2\lambda_{max}^P} > 0$ and $\beta := b_{min}^P > 0$. For $x \neq 0$, dividing \eqref{eq:V dot ub} by $2y > 0$, yields $\dot y \leq w(y) := -\alpha y - \beta < 0$.
    
    Since $w(y) < 0$, we have $\frac{\dot y(t)}{w( y(t) )} \geq 1$. With the same definition of $G$ as in the proof of Proposition~\ref{prop: T_N lb b_max} and similarly to \eqref{eq:G lb}, we obtain $G\big( y(T) \big) - G\big( y(0) \big) \geq \int_0^T 1\, dt = T$ for any $T > 0$. 
    
    We need to prove that our control law ensures a finite time convergence to $0$. Note that $\dot V \leq -2\alpha V -2\beta \sqrt{V} := r(V)$, and that $v(t) = e^{-2\alpha t} \left( \frac{\beta}{\alpha}(1 - e^{\alpha t}) + \|x_0\|_P \right)^2$ is a solution to $\dot v = r(v)$ with $v(0) = \|x_0\|_P^2$. At $\tau = \frac{1}{\alpha} \ln\left(1 + \frac{\alpha}{\beta}\|x_0\|_P\right)$, we have $v(\tau) = 0$. Since $\alpha > 0$ and $\beta > 0$, $v$ converges in finite time to $0$. Thus, according to Proposition~1 of \cite{Finite_time}, $x$ converges in a finite time $T$ to the origin with the control law $\bar{B}\bar{u} = \frac{-x}{\|x\|_P} b_{min}^P$.
    
    Since this control law is not time-optimal, $T \geq T_N^*(x_0)$. With the expression of $G$ calculated previously, we obtain $\frac{1}{\alpha} \ln\left( \hspace{-1mm} 1 + \frac{\alpha}{\beta} \|x_0\|_P \hspace{-1mm} \right) \geq T$. Substituting $\alpha$ and $\beta$ yields \eqref{eq: T_N ub b_min}.
 \end{proof}

\vspace{1mm}

\begin{remark}
    It is more complicated to calculate $b_{min}^P$ than its counterpart $b_{max}^P$ since the maximum of $\| \bar{B} \bar{u} \|_P$ is reached on a vertex of $\bar{\mathcal{U}}$, while its minimum is not. An algorithm for determining $b_{min}^P$ is given in \cite{polyhedron_algo}.
\end{remark}

Now that we have bounded the nominal reach time $T_N^*$, we can investigate the malfunctioning reach time $T_M^*$.

\subsection{Malfunctioning reach time}

In this section we will bound $T_M^*$ following similar methods we applied to $T_N^*$. The difference is that $T_N^*$ is solution to a minimization \eqref{eq: def T_N^*}, while $T_M^*$ is solution of a maximin optimization \eqref{eq: def T_M^*}.

\vspace{1mm}

\begin{proposition}\label{prop: T_M lb}
    If \eqref{eq:splitted ODE} is resiliently stabilizable and $A$ is Hurwitz, then 
    \begin{equation}\label{eq: T_M lb z_max}
        T_M^*(x_0)\ \geq\ 2\frac{\lambda_{min}^P}{\lambda_{max}^Q}\ln \Bigg(1 + \frac{\lambda_{max}^Q \|x_0\|_P}{2 \lambda_{min}^P z_{max}^P }\Bigg),
    \end{equation}
    for any $P \succ 0$ and $Q \succ 0$ such that $PA + A^\top P = - Q$ and with $z_{max}^P := \max\big\{ \| z \|_P : z \in \mathcal{Z} \big\}$.
\end{proposition}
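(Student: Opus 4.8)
The plan is to use H\'ajek's duality theorem to convert the maximin problem \eqref{eq: def T_M^*} into a single minimal-time problem for the dual system, and then to replay on that system the Lyapunov--Bihari estimate already used for Proposition~\ref{prop: T_N lb b_max}. By Theorem~\ref{thm:hajek reciprocity}, steering \eqref{eq:splitted ODE} to the origin at time $T$ against \emph{every} $w(\cdot) \in \mathcal{W}$ is equivalent to steering the dual system \eqref{eq: hajek simplification}, with control $z(\cdot) \in \mathcal{Z}$, to the origin at the same time $T$. Reading this equivalence through the sup--inf in \eqref{eq: def T_M^*}, I would argue that $T_M^*(x_0)$ coincides with the minimal time needed by $\dot y = Ay + z$, $y(0) = x_0$, $z(t) \in \mathcal{Z}$, to reach $0$. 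In other words, $T_M^*$ is to the dual system exactly what $T_N^*$ is to \eqref{eq:initial ODE}, with the admissible set $\bar{B}\bar{\mathcal{U}}$ replaced by $\mathcal{Z}$. Resilient stabilizability ensures, through Theorem~\ref{thm:hajek reciprocity}, that the dual system can be steered from any $x_0$ to the origin, so the set of dual trajectories reaching $0$ is nonempty and this minimal time is finite.

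Once this reduction is made, the remaining computation mirrors the proof of Proposition~\ref{prop: T_N lb b_max}. I would take $V(y) := \|y\|_P^2$ and obtain $\dot V = -y^\top Q y + 2 y^\top P z$. The Cauchy--Schwarz step now gives $y^\top P z \geq -\|y\|_P \|z\|_P \geq -\|y\|_P\, z_{max}^P$, the quantity $z_{max}^P = \max\{\|z\|_P : z \in \mathcal{Z}\}$ being finite and attained since $\mathcal{Z}$ is a closed subset of the compact set $B\mathcal{U}$, hence compact, and $z \mapsto \|z\|_P$ is continuous. Together with $y^\top Q y \leq \lambda_{max}^Q \|y\|_2^2 \leq (\lambda_{max}^Q/\lambda_{min}^P)\|y\|_P^2$, this produces the analogue of \eqref{eq:V dot lb}, namely $\frac{d}{dt}\|y\|_P^2 \geq -(\lambda_{max}^Q/\lambda_{min}^P)\|y\|_P^2 - 2 z_{max}^P \|y\|_P$. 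Writing $\rho := \|y\|_P$, $\alpha := \lambda_{max}^Q/(2\lambda_{min}^P)$ and $\beta := z_{max}^P$, dividing by $2\rho$ yields $\dot\rho \geq -\alpha\rho - \beta =: w(\rho)$, so the same function $G$ and the same integral comparison \eqref{eq:G lb} apply. Since $w(\rho) < 0$, every dual trajectory reaching $0$ from $x_0$ satisfies $T \geq G(0) - G(\|x_0\|_P)$, which simplifies exactly as in Proposition~\ref{prop: T_N lb b_max} to the right-hand side of \eqref{eq: T_M lb z_max}; taking the minimum over such trajectories gives the bound for $T_M^*$. Because this inequality holds for \emph{every} admissible dual control, the lower bound does not even require invoking existence of a time-optimal one.

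The main obstacle I anticipate is the first step rather than the Lyapunov estimate: carefully justifying that the sup--inf in \eqref{eq: def T_M^*} collapses to the dual system's minimal reach time. This rests on the monotonicity built into the Sakawa setting adopted in Section~\ref{sec:problem} --- that reachability against the worst $w$ at some time entails reachability at the optimal time, so that the sup and inf are attained --- and on the fact that the optimal dual control reconstructs, through the relation $\phi(w,t) = z(t) - Cw(t) = Bu(t)$ of Theorem~\ref{thm:hajek reciprocity}, a genuine admissible worst-case pair $(u^*, w^*)$ of \eqref{eq:splitted ODE}. Everything downstream of this identification is a transcription of the earlier proof with $b_{max}^P$ replaced by $z_{max}^P$.
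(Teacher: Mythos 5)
Your proof is correct and follows essentially the same route as the paper: H\'ajek duality to place the combined input $Bu^* + Cw^*$ in $\mathcal{Z}$, then the identical Lyapunov--Bihari estimate from Proposition~\ref{prop: T_N lb b_max} with $b_{max}^P$ replaced by $z_{max}^P$. The only organizational difference is that the paper applies the estimate directly to the optimal trajectory of \eqref{eq:splitted ODE} achieving $T_M^*$, whose total input lies in $\mathcal{Z}$ by Theorem~\ref{thm:hajek reciprocity}, rather than first identifying $T_M^*$ with the dual system's minimal reach time and bounding all dual trajectories; the duality step you flag as delicate is treated by the paper at exactly the same informal level you describe.
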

\begin{proof}
    Since $B\mathcal{U}$ and $C\mathcal{W}$ are compact, $\mathcal{Z}$ is compact \cite{Pontryagin_difference}, so $z_{max}^P$ exists.
    Since $A$ is Hurwitz, for $Q \succ 0$ there exists $P \succ 0$ such that $A^\top P + PA = -Q$. Let $V(x) := x^\top Px$. Then, $\dot V(x) = -x^\top Q x + 2 x^\top P(Bu+Cw)$. Since \eqref{eq:splitted ODE} is resiliently stabilizable, we can take $w^*(\cdot)$ and $u^*(\cdot)$ to be the optimizers in \eqref{eq: def T_M^*}. 
    For $w^*(\cdot) \in \mathcal{W}$, the control signal $Bu^*(\cdot) \in B\mathcal{U}$ drive the state of \eqref{eq:splitted ODE} to $0$. Then, according to Theorem~\ref{thm:hajek reciprocity}, $z(\cdot) = Cw^*(\cdot) + Bu^*(\cdot) \in \mathcal{Z}$. Then, $\|Cw^*(t) + Bu^*(t)\|_P \leq z_{max}^P$, which yields 
   \begin{equation*}
       \dot V(x) \geq -\frac{\lambda_{max}^Q}{\lambda_{min}^P} \|x\|_P^2 - 2 z_{max}^P \|x\|_P.
   \end{equation*}
    We now proceed as in the second half of the proof of Proposition~\ref{prop: T_N lb b_max} to obtain \eqref{eq: T_M lb z_max}.
\end{proof}

\vspace{1mm}

Similarly, we upper bound the malfunctioning reach time.

\vspace{1mm}

\begin{proposition}\label{prop: T_M ub}
    If $0 \in \interior(\mathcal{Z})$ and $A$ is Hurwitz, then
    \begin{equation}\label{eq: T_M ub z_min}
        T_M^*(x_0)\ \leq\ 2\frac{\lambda_{max}^P}{\lambda_{min}^Q}\ln \Bigg(1 + \frac{\lambda_{min}^Q \|x_0\|_P}{2 \lambda_{max}^P z_{min}^P }\Bigg),
    \end{equation}
    for any $P \succ 0$ and $Q \succ 0$ such that $PA + A^\top P = - Q$ and with $z_{min}^P := \min\big\{ \| z \|_P : z \in \partial \mathcal{Z} \big\}$.
\end{proposition}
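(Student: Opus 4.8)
The plan is to mirror the proof of Proposition~\ref{prop: T_N ub b_min}, but working in the dual coordinates where the undesirable input has been absorbed into the set $\mathcal{Z}$. The crucial observation is that, because $T_M^*$ in \eqref{eq: def T_M^*} is a maximin, an upper bound only requires exhibiting, for every $w(\cdot) \in \mathcal{W}$, a single admissible control $u(\cdot)$ that drives the state to the origin within the claimed time. I will in fact construct a control whose closed-loop vector field is independent of $w$, so that the outer maximization over $w$ becomes vacuous.

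First I would establish that $z_{min}^P > 0$ and that the closed $P$-ball $\{z : \|z\|_P \le z_{min}^P\} \subseteq \mathcal{Z}$. Since $0 \in \interior(\mathcal{Z})$ and $\mathcal{Z}$ is compact, every ray from the origin has monotonically increasing $P$-norm and must cross $\partial\mathcal{Z}$ before leaving $\mathcal{Z}$; hence the distance from $0$ to $\partial\mathcal{Z}$ in $P$-norm equals $z_{min}^P$, which is strictly positive and realizes the inradius (so no convexity of $\mathcal{Z}$ is needed). Consequently, for $x \neq 0$ the vector $z := -\frac{x}{\|x\|_P} z_{min}^P$ satisfies $\|z\|_P = z_{min}^P$ and therefore $z \in \mathcal{Z}$.

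Next, recalling that $z \in \mathcal{Z}$ means precisely that for every $w \in \mathcal{W}$ there exists $u \in \mathcal{U}$ with $Bu = z - Cw$, I would define the feedback law that, given the current state $x(t) \neq 0$ and the current undesirable input $w(t)$, selects $u(t) \in \mathcal{U}$ so that $Bu(t) + Cw(t) = -\frac{x(t)}{\|x(t)\|_P} z_{min}^P$. This control is causal, since it uses only the present $w(t)$, and admissible, and it yields the closed-loop dynamics $\dot x = Ax - \frac{x}{\|x\|_P} z_{min}^P$, which no longer depend on $w$. With $V(x) = x^\top P x$, the same computation as in Proposition~\ref{prop: T_N ub b_min} then gives $\dot V \le -\frac{\lambda_{min}^Q}{\lambda_{max}^P}\|x\|_P^2 - 2 z_{min}^P \|x\|_P$, that is, \eqref{eq:V dot ub} with $b_{min}^P$ replaced by $z_{min}^P$.

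The remainder is identical to the second half of the proof of Proposition~\ref{prop: T_N ub b_min}: setting $\alpha := \frac{\lambda_{min}^Q}{2\lambda_{max}^P}$ and $\beta := z_{min}^P$, the comparison with the explicit solution of $\dot v = -2\alpha v - 2\beta\sqrt{v}$ shows that $x$ reaches the origin in a finite time bounded by $\frac{1}{\alpha}\ln\!\big(1 + \frac{\alpha}{\beta}\|x_0\|_P\big)$, which is exactly the right-hand side of \eqref{eq: T_M ub z_min}. Because this single control drives the state to $0$ within that time against every $w(\cdot)$, the inner infimum over $u$ in \eqref{eq: def T_M^*} is at most this time for each $w$, and hence so is the outer supremum, giving the bound on $T_M^*(x_0)$. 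The only genuinely new step relative to Proposition~\ref{prop: T_N ub b_min}, and thus the main obstacle, is this decoupling argument: one must verify that choosing $u$ to exactly cancel $w$ while synthesizing the desired $z \in \mathcal{Z}$ is always feasible, which is where $0 \in \interior(\mathcal{Z})$ is essential, and that the resulting $w$-independent trajectory legitimately upper-bounds the maximin $T_M^*$. Alternatively, the same conclusion follows by applying the H\'ajek duality of Theorem~\ref{thm:hajek reciprocity} to the dual control $z(\cdot)$ generated by this trajectory.
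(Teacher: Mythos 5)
Your proposal is correct and follows essentially the same route as the paper: inscribe the $P$-ball of radius $z_{min}^P$ in $\mathcal{Z}$, steer with $z(t) = -\tfrac{x(t)}{\|x(t)\|_P}z_{min}^P$, decompose $z = Bu + Cw$ via the definition of $\mathcal{Z}$, and conclude by the same Lyapunov comparison and finite-time-convergence argument as in Proposition~\ref{prop: T_N ub b_min}. The only (harmless, arguably cleaner) difference is that you bound the inner infimum uniformly over all $w(\cdot)$ so the outer supremum is controlled directly, whereas the paper fixes the optimal $w^*$ (invoking Corollary~\ref{cor:resilience to zero} so that the supremum is attained) and constructs the suboptimal $u$ against it.
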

\begin{proof}
    Since $\mathcal{Z}$ is compact, so is $\partial \mathcal{Z}$ and thus $z_{min}^P$ exists.
    Since $A$ is Hurwitz, for $Q \succ 0$ there exists $P \succ 0$ such that $A^\top P + PA = -Q$. Let $V(x) := x^\top Px$. Then, $\dot V(x) = -x^\top Q x + 2 x^\top P(Bu+Cw)$. Let $w^*(\cdot)$ be the argument of the supremum in \eqref{eq: def T_M^*}, which is a maximum since the system is resiliently stabilizable according to Corollary~\ref{cor:resilience to zero}. 
    
    Because $0 \in \interior(\mathcal{Z}) \neq \emptyset$, we have $\mathbb{B}_{z_{min}^P}(0) \subseteq \mathcal{Z}$. Then, we can choose the control signal $z(t) := \frac{-x(t)}{\|x(t)\|_P} z_{min}^P \in \mathcal{Z}$.
    By definition of $\mathcal{Z}$, for any $w(\cdot) \in \mathcal{W}$ and thus for $w^*(\cdot) \in \mathcal{W}$ there exists $u(\cdot) \in \mathcal{U}$ such that $z(t) = Cw^*(t) + Bu(t)$. Then, applying $w^*$ and $u$ will lead to an upper bound of $T_M^*$ as the control is not necessarily optimal, while the undesirable input is optimal. Then,
    \begin{equation*}
        \dot V(x) \leq -\frac{\lambda_{min}^Q}{\lambda_{max}^P} \|x\|_P^2 - 2 z_{min}^P \|x\|_P.
    \end{equation*}
    We now proceed as in the second half of the proof of Proposition~\ref{prop: T_N ub b_min} to obtain \eqref{eq: T_M ub z_min}.
\end{proof}

\vspace{1mm}

We are now able to bound $T_N^*(x_0)/T_M^*(x_0)$ for all $x_0$ and thus obtain an approximate of quantitative resilience.

\subsection{Bounding Quantitative Resilience}

As mentioned in Section~\ref{sec:intro}, we are primarily interested in determining a lower bound to quantitative resilience. Indeed, $r_q \geq b$ implies that in the worst case, the malfunctioning system will take less than $1/b$ times longer than the nominal system to reach the origin from the same initial state.

\vspace{1mm}

\begin{theorem}\label{thm: lb rq}
    If $0 \in \interior(\mathcal{Z})$ and $A$ is Hurwitz, then
    \begin{equation}\label{eq: lb r_q}
        r_q \geq \max\left( \frac{\lambda_{min}^P \lambda_{min}^Q}{\lambda_{max}^P \lambda_{max}^Q},\ \frac{z_{min}^P}{b_{max}^P} \right),
    \end{equation}
    for any $P \succ 0$ and $Q \succ 0$ such that $A^\top P + PA = -Q$.
\end{theorem}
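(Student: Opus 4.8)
The plan is to turn the infimum defining $r_q$ into a one–variable calculus problem. Since $r_q=\inf_{x_0}T_N^*(x_0)/T_M^*(x_0)$, I would bound the numerator from below with Proposition~\ref{prop: T_N lb b_max} and the denominator from above with Proposition~\ref{prop: T_M ub}; the hypotheses of both (stabilizability, resilient stabilizability, and $A$ Hurwitz) are furnished by $0\in\interior(\mathcal Z)$, by $A$ Hurwitz, and by Corollary~\ref{cor:resilience to zero}. Fixing an admissible pair $(P,Q)$ and writing $s:=\|x_0\|_P$, these two estimates combine into
\begin{equation*}
\frac{T_N^*(x_0)}{T_M^*(x_0)}\ \geq\ \frac{\lambda_{min}^P\lambda_{min}^Q}{\lambda_{max}^P\lambda_{max}^Q}\,\frac{\ln(1+as)}{\ln(1+cs)},\qquad a:=\frac{\lambda_{max}^Q}{2\lambda_{min}^P b_{max}^P},\ \ c:=\frac{\lambda_{min}^Q}{2\lambda_{max}^P z_{min}^P},
\end{equation*}
so everything reduces to the scalar ratio $g(s):=\ln(1+as)/\ln(1+cs)$ on $(0,\infty)$.

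Next I would control $g$ with two elementary properties of $t\mapsto\ln(1+t)$. Monotonicity gives $g(s)\ge 1$ exactly when $a\ge c$. The concavity-type fact that $t\mapsto\ln(1+t)/t$ is decreasing (equivalently $\tfrac{t}{1+t}<\ln(1+t)$ for $t>0$) gives $\ln(1+as)/a\ge\ln(1+cs)/c$, hence $g(s)\ge a/c$, exactly when $a\le c$. The algebraic heart of the argument is that multiplying the eigenvalue prefactor by $a/c$ cancels all four eigenvalue factors and collapses it to $z_{min}^P/b_{max}^P$; thus the prefactor equals the first quantity inside the maximum, while the prefactor times $a/c$ equals the second. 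Together with the limits $g(0^+)=a/c$ and $g(\infty)=1$, this shows that the two expressions inside the maximum are precisely the two endpoint values of the derived bound, and that the dichotomy $a\gtrless c$ coincides with $\text{(first term)}\lessgtr\text{(second term)}$.

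The step I expect to be the main obstacle is pinning down which endpoint is binding. Because $g$ is monotone between $a/c$ and $1$, each of the two elementary inequalities certifies, on its own regime, only one of the two quantities, and on that regime it is the \emph{smaller} of the two. Consequently, obtaining the stated maximum cannot come from the ratio of Propositions~\ref{prop: T_N lb b_max} and~\ref{prop: T_M ub} at a single $x_0$ alone; it requires arguing that \emph{each} of the two terms is a valid lower bound on $T_N^*/T_M^*$ for \emph{every} $x_0$. I would therefore split into the cases $a\ge c$ and $a\le c$ and, in the direction not covered by the plain ratio, supply a sharper comparison, isolating an independent decay-rate argument for the eigenvalue term and an independent control-authority argument for the $z_{min}^P/b_{max}^P$ term. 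Establishing that both bounds hold globally, so that their maximum survives the infimum over $x_0$, is the crux; the remaining work is the routine substitution of $\alpha$ and $\beta$ already carried out in the earlier propositions.
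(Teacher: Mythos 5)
Your reduction is exactly the paper's: combine Proposition~\ref{prop: T_N lb b_max} with Proposition~\ref{prop: T_M ub} and study the scalar function $f(s) = a\,\ln(1+bs)/\ln(1+cs)$ with $a = \tfrac{\lambda_{min}^P\lambda_{min}^Q}{\lambda_{max}^P\lambda_{max}^Q}$, $b = \tfrac{\lambda_{max}^Q}{2\lambda_{min}^P b_{max}^P}$, $c = \tfrac{\lambda_{min}^Q}{2\lambda_{max}^P z_{min}^P}$ (your $a,c$ are this $b,c$). Your analysis of that function is correct: monotonicity of $\ln$ gives $f \geq a$ when $b \geq c$, the decrease of $t \mapsto \ln(1+t)/t$ gives $f \geq a b/c = z_{min}^P/b_{max}^P$ when $b \leq c$, and in each regime the certified quantity is the \emph{smaller} of the two terms in \eqref{eq: lb r_q}. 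So the ratio argument delivers only $r_q \geq \min\big(a,\ z_{min}^P/b_{max}^P\big)$. The genuine gap is that your proposal stops there: the ``sharper comparison'' and the ``independent decay-rate and control-authority arguments'' that are supposed to upgrade $\min$ to $\max$ are never exhibited. Moreover, no refinement of the same scalar bound can succeed: $f$ is monotone between its endpoint values $ab/c$ (as $s \to 0$) and $a$ (as $s \to \infty$) and takes values arbitrarily close to the smaller of the two, so $\inf_{s>0} f(s) = \min(a,\ ab/c)$ exactly. Obtaining the stated maximum would require genuinely different bounds on $T_N^*$ or $T_M^*$, which you do not provide; as written, the proposal proves a strictly weaker statement than Theorem~\ref{thm: lb rq}.

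That said, your diagnosis is sharper than the paper's own proof, which asserts that $f$ is increasing when $b > c$ and decreasing when $c > b$, and on that basis concludes $\inf f = \max(a,\ ab/c)$. The monotonicity is in fact reversed: with $h(t) := (1+t)\ln(1+t)/t$ one has $h' > 0$ (equivalent to $t > \ln(1+t)$ for $t>0$), and the sign of $f'(s)$ is the sign of $h(cs) - h(bs)$, which is negative when $b > c$. Hence $f$ decreases from $ab/c$ to $a$ when $b > c$ and increases from $ab/c$ to $a$ when $b < c$, so the infimum is always the $\min$, not the $\max$ (for example, $b=3$, $c=1$, $a=1$ gives $f(10) \approx 1.43 < 3 = \max(a, ab/c)$). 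Your observation that each elementary inequality certifies only the smaller of the two terms is precisely the correct conclusion, and the obstacle you flag is real: it is not overcome in the paper either.
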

\begin{proof}
    Since $\interior(\mathcal{Z})$ is not empty in $\mathbb{R}^n$, $\dim(\mathcal{Z}) = n$. Because $\mathcal{Z} \subseteq B\mathcal{U} \subseteq \mathbb{R}^n$, we also have $\rank(B) = n$, hence $\rank(\bar{B}) = n$. With $A$ being Hurwitz, system \eqref{eq:initial ODE} is then stabilizable. Thus, we can use Propositions~\ref{prop: T_N lb b_max} and \ref{prop: T_M ub}.
    For $x_0 \in \mathbb{R}^n$, $x_0 \neq 0$, \eqref{eq: T_N lb b_max} and \eqref{eq: T_M ub z_min} yield
    \begin{equation*}
        \frac{T_N^*(x_0)}{T_M^*(x_0)} \geq \frac{\lambda_{min}^P \lambda_{min}^Q}{\lambda_{max}^P \lambda_{max}^Q} \frac{\ln \Big(1 + \frac{\lambda_{max}^Q \|x_0\|_P}{2 \lambda_{min}^P b_{max}^P }\Big)}{\ln \Big(1 + \frac{\lambda_{min}^Q \|x_0\|_P}{2 \lambda_{max}^P z_{min}^P }\Big)} := f( \|x_0\|_P).
    \end{equation*}
    Then, according to \eqref{eq:r_q}, $r_q \geq \underset{x_0\, \in\, \mathbb{R}^n}{\inf} f(\|x_0\|_P)$.
    To alleviate the notation, we define the positive constants $a := \frac{\lambda_{min}^P \lambda_{min}^Q}{\lambda_{max}^P \lambda_{max}^Q}$, $b := \frac{\lambda_{max}^Q}{2\lambda_{min}^P b_{max}^P}$ and $c := \frac{\lambda_{min}^Q}{2\lambda_{max}^P z_{min}^P}$, so that $f(s) = a \frac{\ln(1 + b s)}{\ln(1 + c s)}$. 
    
    If $b = c$, then $f(s) = a$ for all $s \geq 0$, so $r_q \geq a$. If $b > c$, then $f$ is increasing in $s$, so $\inf \big\{ f(s) : s > 0\big\} = \underset{s \rightarrow 0}{\lim}\, f(s)$. Using the first order expansion of $\ln$ for small $s$, we have
    \begin{equation*}
        f(s) = a \frac{\ln(1 + b s)}{\ln(1 + c s)} \approx a \frac{b s + o(s)}{c s + o(s)} = a \frac{b}{c} + o(1).
    \end{equation*}
    Then, $f(0) = a \frac{b}{c} = \frac{z_{min}^P}{b_{max}^P} > a$. 
    If $c > b$, then $f$ is decreasing, so $\inf \big\{ f(s) : s \geq 0\big\} = \underset{s \rightarrow +\infty}{\lim} f(s)$. Note that
    \begin{equation*}
        \frac{\ln(1 + b s)}{\ln(1 + c s)} = \frac{1 + \frac{\ln(\frac{1}{s} + b)}{\ln(s)} }{ 1 + \frac{\ln(\frac{1}{s} + c)}{\ln(s)} } \xrightarrow[s \rightarrow +\infty]{} \frac{1 + \frac{\ln(b)}{+\infty}}{ 1 + \frac{\ln(c)}{+\infty}} = 1.
    \end{equation*}
    Then, $\underset{s \rightarrow +\infty}{\lim} f(s) = a > a \frac{b}{c}$, so $\underset{s\, \geq\, 0}{\inf} f(s) = \max \big( a,\, a\frac{b}{c} \big)$.
\end{proof}

\vspace{1mm}

We can upper bound $r_q$ using a similar approach

\vspace{1mm}

\begin{theorem}\label{thm: ub rq}
    If $\rank(\bar{B}) = n$, \eqref{eq:splitted ODE} is resiliently stabilizable and $A$ is Hurwitz, then
     \begin{equation}\label{eq: ub r_q}
        r_q \leq \max\left( \frac{\lambda_{max}^P \lambda_{max}^Q}{\lambda_{min}^P \lambda_{min}^Q},\ \frac{z_{max}^P}{b_{min}^P} \right),
    \end{equation}
    for any $P \succ 0$ and $Q \succ 0$ such that $A^\top P + PA = -Q$.
\end{theorem}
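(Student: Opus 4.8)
The plan is to mirror the proof of Theorem~\ref{thm: lb rq}, reversing every inequality: since $r_q = \inf_{x_0} T_N^*(x_0)/T_M^*(x_0)$, to bound it \emph{from above} I would bound each individual ratio from above by dividing an upper bound on $T_N^*$ by a lower bound on $T_M^*$. First I would confirm that the hypotheses feed the right propositions. The assumption $\rank(\bar B)=n$ together with $A$ Hurwitz is exactly what Proposition~\ref{prop: T_N ub b_min} needs for the upper bound \eqref{eq: T_N ub b_min} on $T_N^*$, while resilient stabilizability together with $A$ Hurwitz is exactly what Proposition~\ref{prop: T_M lb} needs for the lower bound \eqref{eq: T_M lb z_max} on $T_M^*$. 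Both sets of hypotheses are granted directly here, so no preliminary deductions are required (in contrast to Theorem~\ref{thm: lb rq}, where stabilizability and $\rank(\bar B)=n$ had to be extracted from $0\in\interior(\mathcal{Z})$).

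Next, for any $x_0\neq 0$ I would divide \eqref{eq: T_N ub b_min} by \eqref{eq: T_M lb z_max} to get
\[ \frac{T_N^*(x_0)}{T_M^*(x_0)} \leq \frac{\lambda_{max}^P \lambda_{max}^Q}{\lambda_{min}^P \lambda_{min}^Q}\, \frac{\ln(1 + b' \|x_0\|_P)}{\ln(1 + c' \|x_0\|_P)} =: g(\|x_0\|_P), \]
where $b' := \lambda_{min}^Q/(2\lambda_{max}^P b_{min}^P)$ and $c' := \lambda_{max}^Q/(2\lambda_{min}^P z_{max}^P)$. Because $r_q$ is an infimum over $x_0$, for every $x_0\neq 0$ we have $r_q \leq T_N^*(x_0)/T_M^*(x_0) \leq g(\|x_0\|_P)$; as $\|x_0\|_P$ ranges over $(0,+\infty)$ when $x_0$ ranges over $\mathbb{R}^n\setminus\{0\}$, taking the infimum of the right-hand side yields $r_q \leq \inf_{s>0} g(s)$. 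This reduces the theorem to the same one-variable study of $g(s) = a'\,\ln(1+b's)/\ln(1+c's)$, with $a' := \lambda_{max}^P\lambda_{max}^Q/(\lambda_{min}^P\lambda_{min}^Q)$, that already appears in Theorem~\ref{thm: lb rq}, now with the constants in the swapped roles.

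Finally I would run the endpoint-and-case analysis on $g$ exactly as in Theorem~\ref{thm: lb rq}. The first-order expansion of $\ln$ near $0$ gives the endpoint value $g(0^+) = a'b'/c'$, which collapses to $z_{max}^P/b_{min}^P$ after the eigenvalue factors cancel, while the asymptotics $\ln(1+b's)\sim\ln s$ as $s\to+\infty$ give $g(+\infty) = a' = \lambda_{max}^P\lambda_{max}^Q/(\lambda_{min}^P\lambda_{min}^Q)$. Splitting into the cases $b'=c'$, $b'>c'$, $c'>b'$ then identifies $\inf_{s>0} g(s)$ with one of these two endpoint limits, delivering the bound \eqref{eq: ub r_q}.

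The main obstacle I anticipate is precisely this extremum step: the two endpoint limits are clean, but one must verify that $g$ attains its extreme value at an endpoint rather than in the interior, i.e. that $s\mapsto \ln(1+b's)/\ln(1+c's)$ is monotone. This amounts to signing the derivative of the log-ratio, equivalently observing that the ratio stays on one side of $1$ according to the sign of $b'-c'$, which is the only genuinely analytic ingredient; everything else is a direct transcription of the earlier Lyapunov-based bounds and the argument of Theorem~\ref{thm: lb rq}.
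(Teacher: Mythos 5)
Your proposal is the paper's own proof: the paper likewise feeds the stated hypotheses into Propositions~\ref{prop: T_N ub b_min} and~\ref{prop: T_M lb}, forms the same function $g(s) = a'\ln(1+b's)/\ln(1+c's)$ with exactly your constants, and then simply remarks that ``$g$ is similar to $f$'' and reuses the case analysis of Theorem~\ref{thm: lb rq} to get $\inf_s g(s) = \max\big(a', a'b'/c'\big)$. So the route is identical, and the reduction to the one-variable infimum is sound.

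One caution on the step you single out as the main obstacle. First, ``the ratio stays on one side of $1$ according to the sign of $b'-c'$'' does not by itself give monotonicity of $s \mapsto \ln(1+b's)/\ln(1+c's)$; you would genuinely have to sign the derivative. Second, if you do, you will find the monotonicity direction asserted in the proof of Theorem~\ref{thm: lb rq} is reversed: for $b'>c'$ the log-ratio \emph{decreases} from $b'/c'$ at $0^+$ to $1$ at $+\infty$ (e.g.\ $\ln(1+2s)/\ln(1+s)$ decreases from $2$ to $1$), so the infimum of a monotone $g$ is $\min\big(a', a'b'/c'\big)$, not the $\max$. For the present upper-bound theorem this is harmless --- indeed the whole monotonicity discussion is unnecessary here: since $r_q \leq g(s)$ for \emph{every} $s>0$, letting $s \to 0^+$ and $s \to +\infty$ separately already gives $r_q \leq \min\big(\tfrac{\lambda_{max}^P\lambda_{max}^Q}{\lambda_{min}^P\lambda_{min}^Q},\, \tfrac{z_{max}^P}{b_{min}^P}\big)$, which is stronger than \eqref{eq: ub r_q} and implies it. I would recommend writing the argument that way rather than importing the case analysis verbatim; the reversed monotonicity is only consequential for the lower-bound Theorem~\ref{thm: lb rq}, where the $\max$/$\min$ distinction is not innocuous.
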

\begin{proof}
    Following our assumptions, we are allowed to use Propositions~\ref{prop: T_N ub b_min} and \ref{prop: T_M lb}.
    For $x_0 \in \mathbb{R}^n$, $x_0 \neq 0$, combining \eqref{eq: T_N ub b_min} and \eqref{eq: T_M lb z_max} yields
    \begin{equation*}
        \frac{T_N^*(x_0)}{T_M^*(x_0)} \leq \frac{\lambda_{max}^P \lambda_{max}^Q}{\lambda_{min}^P \lambda_{min}^Q} \frac{\ln \Big(1 + \frac{\lambda_{min}^Q \|x_0\|_P}{2 \lambda_{max}^P b_{min}^P }\Big)}{\ln \Big(1 + \frac{\lambda_{max}^Q \|x_0\|_P}{2 \lambda_{min}^P z_{max}^P }\Big)} := g( \|x_0\|_P).
    \end{equation*}
    Then, according to \eqref{eq:r_q}, $r_q \leq \underset{x_0\, \in\, \mathbb{R}^n}{\inf} g(\|x_0\|_P)$.
    To alleviate the notation, we define the positive constants $a := \frac{\lambda_{max}^P \lambda_{max}^Q}{\lambda_{min}^P \lambda_{min}^Q}$, $b := \frac{\lambda_{min}^Q}{2\lambda_{max}^P b_{min}^P}$ and $c := \frac{\lambda_{max}^Q}{2\lambda_{min}^P z_{max}^P}$, so that $g(s) = a \frac{\ln(1 + b s)}{\ln(1 + c s)}$.
    
    This function $g$ is similar to $f$ in the proof of Theorem~\ref{thm: lb rq}, and thus $r_q \leq \underset{x_0\, \in\, \mathbb{R}^n}{\inf} g(\|x_0\|_P) = \max \big( a,\, a\frac{b}{c} \big)$ yields \eqref{eq: ub r_q}.
\end{proof}

Note that we used the same pair $(P,Q)$ to bound both $T_N^*$ and $T_M^*$. Employing different pairs $(P_N, Q_N)$ and $(P_M, Q_M)$ would make $f$ depend on both $\|x_0\|_{P_N}$ and $\|x_0\|_{P_M}$. Then, we would need to take $x_0 \in \mathbb{R}^n$ instead of $\|x_0\|_P \in \mathbb{R}^+$ as the argument of $f$, which would significantly complicate the minimum search.
We leave this more convoluted approach for possible future work.

\section{NUMERICAL RESULTS}\label{sec:results}

We illustrate our work on a room temperature control system motivated by \cite{temperature} and schematized in Figure~\ref{fig:rooms}.

\begin{figure}[htbp!]
    \centering
    \begin{tikzpicture}[scale = 0.7]
        \draw[] (-4.5, 2) -- (-3, 2);
        \draw[] (3, 2) -- (4.5,2);
        \draw[] (-4.5, 0) -- (-3, 0);
        \draw[] (3, 0) -- (4.5,0);
        \node at (-4, 1) {$T_{goal}$};
        \node at (4, 1) {$T_{goal}$};
        \draw[] (-4.5, -1) -- (4.5, -1);
        \node at (3.5, -0.7) {\textcolor{red}{hallway}};
        
        \node at (4, 3) {\textcolor{blue}{outside}};
        \node at (-4.2, 3) {\textcolor{orange}{Sun}};
        
        \draw[thick] (-2.3, 0) -- (-3, 0) -- (-3, 2) -- (-2.3, 2) -- (-2.1, 2.2);
        \draw[thick] (-1.9, 2.2) -- (-1.7, 2) -- (-1, 2) -- (-1, 0) -- (-1.7, 0) -- (-2.1, -0.4);
        \node at (-2, 1) {$T_1$};
        \draw[blue, ultra thick, <-] (-2, 2.6) -- (-2, 1.8);
        \node at (-2, 2.8) {\textcolor{blue}{$q_{w1}$}};
        \draw[red, ultra thick, ->] (-2.5, -0.4) -- (-1.9, 0.2);
        \node at (-2.6, -0.6) {\textcolor{red}{$q_{d1}$}};
        \draw[ultra thick, <->] (-3.3, 1) -- (-2.7, 1);
        \node at (-2.6, 0.6) {\textcolor{black}{$q_{g1}$}};
        \draw[orange, ->] (-3.2, 2.5) -- (-2.7, 2);
        \draw[orange, ->] (-3.3, 2.5) -- (-2.8, 2);
        \draw[orange, ->] (-3.1, 2.5) -- (-2.6, 2);
        \node at (-3.2, 2.8) {\textcolor{orange}{$q_{S1}$}};
        \draw[blue, ->] (-1.4, 1.8) -- (-1.4, 2.2);
        \draw[blue, ->] (-1.6, 1.8) -- (-1.6, 2.2);
        \draw[blue, ->] (-1.2, 1.8) -- (-1.2, 2.2);
        \node at (-1.4, 1.6) {\textcolor{blue}{$q_{l1}$}};

        \draw[thick] (-0.3, 0) -- (-1, 0) -- (-1, 2) -- (-0.3, 2) -- (-0.1, 2.2);
        \draw[thick] (0.1, 2.2) -- (0.3, 2) -- (1, 2) -- (1, 0) -- (0.3, 0) -- (-0.1, -0.4);
        \node at (0, 1) {$T_2$};
        \draw[blue, ultra thick, <-] (0, 2.6) -- (0, 1.8);
        \node at (0, 2.8) {\textcolor{blue}{$q_{w2}$}};
        \draw[red, ultra thick, ->] (-0.5, -0.4) -- (0.1, 0.2);
        \node at (-0.6, -0.6) {\textcolor{red}{$q_{d2}$}};
        \draw[ultra thick, <->] (-1.3, 1) -- (-0.7, 1);
        \node at (-0.6, 0.6) {\textcolor{black}{$q_{12}$}};
        \draw[orange, ->] (-1, 2.5) -- (-0.5, 2);
        \draw[orange, ->] (-1.1, 2.5) -- (-0.6, 2);
        \draw[orange, ->] (-0.9, 2.5) -- (-0.4, 2);
        \node at (-1, 2.8) {\textcolor{orange}{$q_{S2}$}};
        \draw[blue, ->] (0.4, 1.8) -- (0.4, 2.2);
        \draw[blue, ->] (0.6, 1.8) -- (0.6, 2.2);
        \draw[blue, ->] (0.8, 1.8) -- (0.8, 2.2);
        \node at (0.6, 1.6) {\textcolor{blue}{$q_{l2}$}};

        \draw[thick] (1.7, 0) -- (1, 0) -- (1, 2) -- (1.7, 2) -- (1.9, 2.2);
        \draw[thick] (2.1, 2.2) -- (2.3, 2) -- (3, 2) -- (3, 0) -- (2.3, 0) -- (1.9, -0.4);
        \node at (2, 1) {$T_3$};
        \draw[blue, ultra thick, <-] (2, 2.6) -- (2, 1.8);
        \node at (2, 2.8) {\textcolor{blue}{$q_{w3}$}};
        \draw[red, ultra thick, ->] (1.5, -0.4) -- (2.1, 0.2);
        \node at (1.4, -0.6) {\textcolor{red}{$q_{d3}$}};
        \draw[ultra thick, <->] (3.3, 1) -- (2.7, 1);
        \node at (1.4, 0.6) {\textcolor{black}{$q_{23}$}};
        \draw[ultra thick, <->] (1.3, 1) -- (0.7, 1);
        \node at (3.4, 0.6) {\textcolor{black}{$q_{3g}$}};
        \draw[orange, ->] (1, 2.5) -- (1.5, 2);
        \draw[orange, ->] (1.1, 2.5) -- (1.6, 2);
        \draw[orange, ->] (0.9, 2.5) -- (1.4, 2);
        \node at (1, 2.8) {\textcolor{orange}{$q_{S3}$}};
        \draw[blue, ->] (2.4, 1.8) -- (2.4, 2.2);
        \draw[blue, ->] (2.6, 1.8) -- (2.6, 2.2);
        \draw[blue, ->] (2.8, 1.8) -- (2.8, 2.2);
        \node at (2.6, 1.6) {\textcolor{blue}{$q_{l3}$}};
        
    \end{tikzpicture}
    \caption{Scheme of the rooms and of the heat transfers. The heater $q_h$ and AC transfers $q_{AC}$ are not shown for clarity.}
    \label{fig:rooms}
\end{figure}
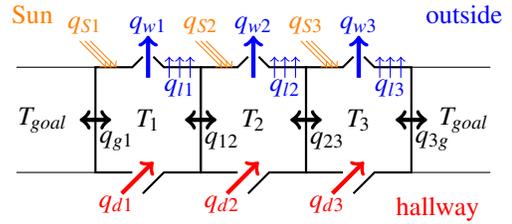

The objective is to make the rooms 1, 2 and 3 reach the target temperature $T_{goal}$, which is the temperature of the two neighboring rooms as represented on Figure~\ref{fig:rooms}. There are several ways to control the temperature. The central conditioning unit controls the heater $q_h$ and the AC $q_{AC}$ in all the rooms. Outside of work hours, the controller can also incrementally open doors $q_d$ and windows $q_w$ for room specific adjustments. Because each room is fitted with large windows and blinds, one can also use solar heating $q_S$. The heat loss through the outside wall is $q_l$ and the heat transfer between rooms $i$ and $j$ is $q_{ij}$.
Following \cite{temperature}, the temperature dynamics of the rooms are
\begin{align*}
    mC_p \dot T_1 &= (q_h - q_{AC}) + (q_{d1} - q_{w1}) + (q_{S1} - q_{l1}) + q_{g1} + q_{12}, \\
    mC_p \dot T_2 &= (q_h - q_{AC}) + (q_{d2} - q_{w2}) + (q_{S2} - q_{l2}) - q_{12} + q_{23}, \\
    mC_p \dot T_3 &= (q_h - q_{AC}) + (q_{d3} - q_{w3}) + (q_{S3} - q_{l3}) - q_{23} + q_{g3},
\end{align*}
with $m$ the mass of air in each room, $C_p$ is the specific heat capacity of air, $q_{gi} = aU_{gi}(T_{goal} - T_i)$, $q_{ij} = aU_{ij}(T_j - T_i)$, with $a$ the area of the wall between rooms and $U_{ij}$ the overall heat transfer coefficient between rooms $i$ and $j$. This coefficient depends on the wall materials, which are made of concrete and wood. 

Since most of the heat transfers cannot be modeled with symmetric inputs, we combine them in pairs: $q_h - q_{AC} =: Q_{hAC} u_{hAC}$, $q_{di} - q_{wi} =: Q_{dw} u_{dw}^i$ and $q_{Si} - q_{li} =: Q_{Sl} u_{Sl}^i$ with $u_{hAC} \in [-1, 1]$, $u_{dw}^i \in [-1, 1]$, $u_{Sl}^i \in [-1, 1]$ and $i \in \{1,2,3\}$.
The numerical values used in our calculations are gathered in Table~\ref{tab:values}.

\begin{table}[htbp!]
\caption{Numerical values for the simulation, based on \cite{temperature}.}
\begin{tabular}{ccc|ccc}
    Parameter & Value & Unit & Parameter & Value & Unit \\ \hline
    $a$ & $12$ & $m^2$ & $mCp$ & $42186$ & $J \cdot K^{-1}$ \\
    $U_{g1}$ & $6.27$ & $W \cdot K^{-1}$ & $U_{12}$ & $5.08$ & $W \cdot K^{-1}$ \\
    $U_{23}$ & $5.41$ & $W \cdot K^{-1}$ & $U_{3g}$ & $6.27$ & $W \cdot K^{-1}$ \\
    $Q_{hAC}$ & $350$ & $W$ & $Q_{dw}$ & $300$ & $W$ \\
    $Q_{Sl}$ & $200$ & $W$ & $T_{goal}$ & $293$ & $K$ \\
\end{tabular}
\label{tab:values}
\end{table}
We write the dynamics as $\dot T = AT + \bar{B}\bar{u} + D T_{goal}$, with
\begin{equation*}
    A = \frac{1}{mCp} \begin{pmatrix} -U_{g1} -U_{12} & U_{12} & 0 \\ U_{12} & -U_{12} - U_{23} & U_{23} \\ 0 & U_{23} & - U_{23} - U_{3g} \end{pmatrix},
\end{equation*}
\begin{equation*}
    \bar{B} = \frac{1}{mCp}\begin{pmatrix} Q_{Sl} & 0 & 0 & Q_{dw} & 0 & 0 & Q_{hAC} \\
                                           0 & Q_{Sl} & 0 & 0 & Q_{dw} & 0 & Q_{hAC} \\
                                           0 & 0 & Q_{Sl} & 0 & 0 & Q_{dw} & Q_{hAC}
    \end{pmatrix},
\end{equation*}
\begin{equation*}
    \bar{u}^\top = \begin{pmatrix} u_{Sl}^1 & u_{Sl}^2 & u_{Sl}^3 & u_{dw}^1 & u_{dw}^2 & u_{dw}^3 & u_{hAC} \end{pmatrix},
\end{equation*}
\begin{equation*}
    T = \begin{pmatrix} T_1 \\ T_2 \\ T_3 \end{pmatrix} \quad \text{and} \quad D = \frac{1}{mCp}\begin{pmatrix} U_{g1} \\ 0 \\ U_{3g} \end{pmatrix}.
\end{equation*}
Let $x^\top := \big( x_1\ x_2\ x_3 \big) = T - \mathbf{1} T_{goal}$. Then, $\dot x = \dot T = Ax + \bar{B}\bar{u} + D T_{goal} + A\mathbf{1}T_{goal} = Ax + \bar{B}\bar{u}$ and $x_{goal} = \big( 0\ 0\ 0 \big)$. 

Since $\lambda(A) = \big\{ -0.052, -0.033, -0.010\big\} \subseteq \mathbb{R}^-$, $A$ is Hurwitz. Then, according to Theorem~\ref{thm: N&S resilience}, the system cannot be resilient to the loss of any actuator, but it can be resiliently stabilizable by Theorem~\ref{thm: N&S resiliently stabilizable}. Note that for the loss of any one column $C$, we have $\rank(B) = 3$ and we can numerically verify that $-C\mathcal{W} \subseteq \interior(B\mathcal{U})$. Then, following Proposition~\ref{prop:compactness} and Corollary~\ref{cor:resilience to zero}, we know that the system is resiliently stabilizable.

We consider a situation where a worker remains in their room after hours and manually opens or closes their door and window, thus overriding the controller.
Let us quantify the resilience of the system to the loss of control over the door and window in room 1, i.e., over $u_{dw}^1$.

To use the upper and lower bounds derived in Section~\ref{sec:quantitative}, we need to choose positive definite matrices $P$ and $Q$ solving the Lyapunov equation $A^\top P + PA + Q = 0$. The simplest approach is to generate a stochastic $Q \succ 0$ and to solve the equation for $P \succ 0$.
Another approach relies on the fact that for $y$ small, $\ln(1+y) \approx y$. Thus, the lower bound of $T_M^*$ in \eqref{eq: T_M lb z_max} can be approximated by $\frac{\|x_0\|_P}{z_{max}^P}$. To maximize this lower bound, we minimize $z_{max}^P = \max\big\{ \|z\|_P : z \in \mathcal{Z}\big\}$, i.e., we choose $P \succ 0$ generating the tightest ellipsoid outer approximation of $\mathcal{Z}$. Similarly, to minimize the upper bound \eqref{eq: T_M ub z_min}, we need $P$ to generate the largest ellipsoid inside $\mathcal{Z}$. Then, we take $Q = -A^\top P - PA$, but there is no guarantee that $Q \succ 0$.

Since we shifted the coordinate system by $T_{goal} = 20^\circ C$, an initial state slightly warmer than desired can be $x_0^\top = \big( 0.8^\circ C\ 0.7^\circ C\ 0.9^\circ C \big)$.
We compute $T_N^*(x_0)$ with Eaton's algorithm \cite{Eaton} and its lower and upper bounds with \eqref{eq: T_N lb b_max} and \eqref{eq: T_N ub b_min}: $35.5s \leq T_N^*(x_0) = 42.5s \leq 54.1s$.
We compute $T_M^*(x_0)$ with Sakawa's algorithm \cite{Sakawa} and its lower and upper bounds with \eqref{eq: T_M lb z_max} and \eqref{eq: T_M ub z_min}: $53s \leq T_M^*(x_0) = 110.5s \leq 135s$.

Then, the room temperatures can take up to $T_M^*(x_0)/T_N^*(x_0) = 2.6$ times longer to all reach $T_{goal}$ after the loss of control authority over the door and window in room 1. Our bounds lead to a worst-case time increase by a factor of up to $3.8$. As can be seen on Figure~\ref{fig:T_M}, the bounds generated with the tight ellipsoidal approximations of $\mathcal{Z}$ are better than the bounds obtained with stochastic pairs $(P,Q)$.

\begin{figure}[htbp!]
    \centering
    \includegraphics[scale=0.5]{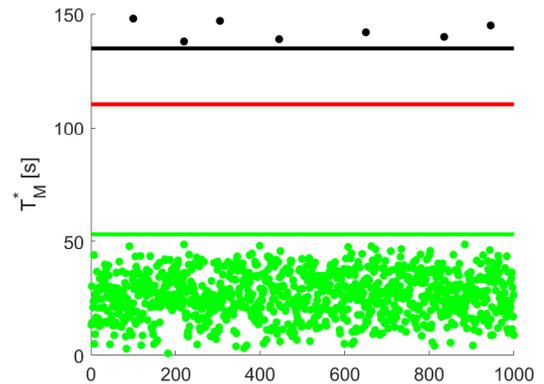}
    \caption{Bounds on the malfunctioning reach time $T_M^*(x_0)$ represented by the red line. The green and black dots are the lower bounds \eqref{eq: T_M lb z_max} and the upper bounds \eqref{eq: T_M ub z_min} for 1000 randomly generated solution pairs $(P,Q)$.
    The green and black line are the lower and upper bounds \eqref{eq: T_M lb z_max} and \eqref{eq: T_M ub z_min} generated with the ellipsoid approximations of $\mathcal{Z}$.}
    \label{fig:T_M}
\end{figure}

Using Theorem~\ref{thm: lb rq} and \ref{thm: ub rq} we can also bound the resilience of the system starting from any initial condition $x_0$ and we obtain $0.097 \leq r_q  \leq 2.79$. The lower bound means that the loss of control over $u_{dw}^1$ can make the damaged system up to $1/0.097 = 10.3$ times slower than the nominal system to reach $T_{goal}$ in the worst case. Since the upper bound is larger than $1$, it does not convey any information.

If the system loses control authority over the central heating/AC unit, i.e., $u_{hAC}$, the rooms can take as much as $T_M^*(x_0)/T_N^*(x_0) = 4.7$ times longer to reach $T_{goal}$ from the same initial temperature, while our bound predicts a ratio of up to $9.3$. These ratios are larger than for the loss of control over the window and door of room 1 because $Q_{hAC} > Q_{dw}$ and the central heating/AC affects directly all 3 rooms.

\section{CONCLUSION}

This paper explores and quantifies the resilience of control systems to the loss of control authority over some of their actuators.
We established novel necessary and sufficient conditions for the resilience of general linear systems. 
Based on Lyapunov theory and differential inequalities we derived analytical bounds on the nominal and malfunctioning reach times, allowing to approximate quantitative resilience of general linear systems.

There are several avenues of future work. We want to investigate thoroughly how to determine the pair of matrices $(P,Q)$ generating the best bounds on the reach times $T_N^*$ and $T_M^*$ and on the quantitative resilience $r_q$.
We also desire to extend our resilience theory to nonlinear systems and to more complex and more natural scenarios where the system must visit a succession of targets.
Ensuring the safety of critical systems by preventing them from visiting dangerous locations while completing their mission even after enduring a loss of control is also among our future objectives.

\addtolength{\textheight}{-0cm}   




\bibliographystyle{IEEEtran}
\bibliography{ref.bib}

\end{document}